\colorlet{shadecolor}{yellow}
\newtheorem{Theorem}{Theorem}
\begin{document}
\captionsetup{font=small} 
\title{StreamOptix: A Cross-layer Adaptive Video Delivery Scheme}
%
%
%
%

\author{{Mufan~Liu,~\IEEEmembership{Student Member,~IEEE,}
        Le~Yang,~\IEEEmembership{Member,~IEEE,}
        Yifan~Wang,
        Yiling~Xu,~\IEEEmembership{Member,~IEEE,}
        Ye-Kui~Wang,
        Yunfeng~Guan
        }
\thanks{This is an extended version of the one appeared at GlobeCom 2023 \cite{globecom}.}}

%
%

\markboth{Journal of \LaTeX\ Class Files,~Vol.~14, No.~8, August~2015}%
{Shell \MakeLowercase{\textit{et al.}}: Bare Advanced Demo of IEEEtran.cls for IEEE Computer Society Journals}
%



\IEEEtitleabstractindextext{%
\begin{abstract}
This paper presents a cross-layer video delivery scheme, StreamOptix, and proposes a joint optimization algorithm for video delivery that leverages the characteristics of the physical (PHY), medium access control (MAC), and application (APP) layers. Most existing methods for optimizing video transmission over different layers were developed individually. Realizing a cross-layer design has always been a significant challenge, mainly due to the complex interactions and mismatches in timescales between layers, as well as the presence of distinct objectives in different layers. To address these complications, we take a divide-and-conquer approach and break down the formulated cross-layer optimization problem for video delivery into three sub-problems. We then propose a three-stage closed-loop optimization framework, which consists of 1) an adaptive bitrate (ABR) strategy based on the link capacity information from PHY, 2) a video-aware resource allocation scheme accounting for the APP bitrate constraint, and 3) a link adaptation technique utilizing the soft acknowledgment feedback (soft-ACK). The proposed framework also supports the collections of the distorted bitstreams transmitted across the link. This allows a more reasonable assessment of video quality compared to many existing ABR methods that simply neglect the distortions occurring in the PHY layer. Experiments conducted under various network settings demonstrate the effectiveness and superiority of the new cross-layer optimization strategy. A byproduct of this study is the development of more comprehensive performance metrics on video delivery, which lays down the foundation for extending our system to multimodal communications in the future. Code for reproducing the experimental results is available at https://github.com/Evan-sudo/StreamOptix.
\end{abstract}

\begin{IEEEkeywords}
Cross-layer design, adaptive bitrate, video delivery, link adaptation, resource allocation.
\end{IEEEkeywords}}
\maketitle

\IEEEdisplaynontitleabstractindextext

%
\IEEEpeerreviewmaketitle


\section{Introduction}
The deployment of 5G technology promotes the rapid proliferation of mobile video streaming.
According to Ericsson's annual report 2022 \cite{Ericsson}, it is anticipated that the volume of mobile data traffic will double between 2023 and 2027, with an annual growth rate of nearly 30\%. On the other hand, this poses signiﬁcant challenges to the development of efficient methods for video delivery, due to the increase in the complexity of wireless transmission scenarios and higher user expectations for viewing experiences.

To attain satisfactory quality of experience (QoE) in a fast-changing environment, methods for optimizing video delivery have been proposed. Among them, more than half were built on Dynamic Adaptive Video Streaming over HTTP (DASH) \cite{dash}, which serves as the backbone for realizing various video delivery strategies in the last decade. In DASH, videos are first segmented and encoded into chunks with different bitrates and resolutions, more commonly known as representations, and are then stored on the server. When the video content is requested, the DASH server considers both playback and network conditions such as buffer size and throughput\footnote{Throughout this paper, we say \textbf{throughput} when talking about network bandwidth and \textbf{bitrate} when talking about encoding quality.}, and switches among the available chunk representations to ensure good QoE. This process, referred to as the adaptive bitrate (ABR), enables the system to adapt to the fluctuating network conditions and guarantees a seamless playback experience. In real world, video transmission encompasses not just the upper-layer ABR but also the complex lower-layer wireless link, commonly referred to as the 'last mile' transmission \cite{mmcapacity}. The large gap between these two layers has led to previous studies on video delivery treating them separately. This separation has resulted in an optimization bottleneck in current video delivery research.

\begin{figure}
\centering
\includegraphics[width=\linewidth]{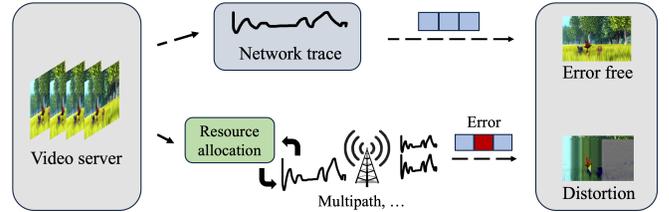}
\caption{Trace-driven ABR environment (upper) overlooks the effect of resource allocation and channel fading over wireless link (bottom).}
\label{fig:images}
\vspace{-0.4cm}
\end{figure}

\subsection{Motivation}
ABRs are currently evaluated in a way at the application layer (APP) without any transmission error, known as trace-driven simulation. This method simulates video streaming by using precollected throughput traces. It updates the video downloading and playback processes effectively  based on the correlation between the video size and throughput trace. However, trace-driven environment overlooks the impact of the wireless link, despite its simplicity and speed. As a result, real world video streaming entails intricate wireless link transmissions, leading to distortions caused by multipath fading and fluctuations in throughput due to resource allocation. (Fig. 1). To emphasize its effects, we constructed a 5G physical downlink shared channel (PDSCH) to test existing DASH-bassed ABRs and analyzed the decoded bitstreams\footnote{
The experimental setup was based on the static configuration described in Section 4, but retained the default settings for link adaptation and resource allocation.}. As depicted in Fig. \ref{fig:images}, resource allocation modifies throughput during streaming, which is unobservable in a trace-driven environment. Subsequently, the video displays arbitrary distortions following the fading of the wireless link. To sum up, the evaluation revealed two issues. First, Fig. 2a shows that \textbf{higher QoE scores, estimated by ABRs based on the selected birates, do not always correspond to better received visual quality} (in this experiment, the structural similarity (SSIM) quality metric was used). In other words, there exists \textit{inconsistency} between the ABR-predicted QoE and actually received visual quality scores post wireless transmission. 
Neglecting the transmission errors in this case leads to overestimated user QoE by ABRs. 

Another observation is illustrated in Fig. 2b. We noticed \textbf{an evident \textit{gap} between the selected bitrates and instantenous network capacity}. Due to the inherently ill-posed nature of throughput prediction, it is inevitable that the video bitrate based on predicted throughput will deviate from the actual throughput (see e.g. \cite{bola, mpc, pensieve, festive}).
Improved throughput prediction algorithms were proposed in \cite{payloadaware, situ}, but they did not utilize that the throughput can in fact be recontrolled after prediction. In fact, incorporating video bitrate considerations into resource allocation to control throughput can effectively reduce this gap.

Traditionally, link control conducted at PHY and MAC was decoupled from the APP, focusing on optimizing the Quality of Service (QoS) of the network only while ignoring the video-aware information \cite{mmcapacity, OFDM, crossbmsb, Prius, access}. Besides, the ABR mechanisms in APP usually lack the use of feedbacks from the PHY and MAC layers, resulting in inaccurate evaluation of the QoE, as well as limited performance gain. 
Existing cross-layer studies either followed a top-down or bottom-up approach in their designs \cite{BMSB2, mmcapacity, xiong, OFDM, access}. However, the complex interplay between layers was not fully exploited, and a closed-loop cross-layer technique is yet to be developed. Moreover, these work did not construct and provide a practical and efficient cross-layer video delivery platform. These motivate taking a cross-layer approach that integrates ABR with link adaptation in PHY and resource allocation in MAC to combat link distortions and improve bitrate utilization jointly.
\begin{figure}
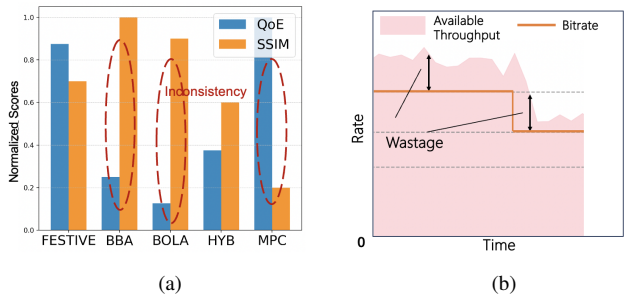

\vspace{-0.3cm}
  \begin{subfigure}{0.499\columnwidth}
    \includegraphics[height=0.78\linewidth]{photo/discovery.pdf}
    \caption{}
  \end{subfigure}
\begin{subfigure}{0.476\columnwidth}
    \includegraphics[height=0.8\linewidth]{photo/resource.pdf}
    \caption{}
  \end{subfigure}
  \caption{Evaluation of existing ABR strategies under simulated 5G wireless links. (a) Inconsistency between the predicted QoE and actually received SSIM. (b) Gap between the selected video bitrate and link capacity. }
  \label{observation}
  \vspace{-0.3cm}
\end{figure}

\subsection{Our design}
This paper presents a three-level closed-loop cross-layer framework for optimizing video delivery. Our method integrates the PHY, MAC, and APP layers.
In the APP, we employ model predictive control (MPC) for rate adaptation based on the link capacities provided by PHY. The throughput measured in APP is no longer used, because it is an irregularly sampled sequence (ISS) due to the fluctuations in the chunk downloading time. In contrast, the intervals for estimating the link capacities at PHY are much smaller \cite{mmcapacity}, which ensures a sequence of a uniformly spaced measurement samples.
The selected bitrate at APP is that passed down to MAC in order to perform the video-aware resource allocation, aiming to maximize throughput utilization. Meanwhile, the allocated resource blocks (RB) count is transmitted to PHY as well, which is combined with a newly proposed soft link adaptation scheme there to decrease the error rate and improve the stability of throughput. The entire optimization process forms a closed loop, and the effectiveness of our proposed system was proved through extensive evaluations. 

Our contributions include
\begin{itemize}
    \item We construct the cross-layer video transmission platform, which encompasses the PHY, MAC, and APP layers. It can accurately emulate both the link transmission and playback of video delivery.
    \item We designed a three-level closed-loop cross-layer video delivery framework. This scheme effectively addresses the optimization needs of different layers. It also achieves cross-layer interaction and cooperative optimization.
    \item Extensive evaluations were conducted and we collected the distorted video streaming for further objective quality assessment to highlight the gains brought by our cross-layer design.
\end{itemize}

The rest of this paper is organized as follows. Sec. \uppercase\expandafter{\romannumeral 1} A. provides an overview of related work. Sec. \uppercase\expandafter{\romannumeral 2} formulates the cross-layer QoE maximization problem in consideration. In Sec. \uppercase\expandafter{\romannumeral 3}, we present the design of the cross-layer framework and Sec. \uppercase\expandafter{\romannumeral 4} gives the evaluation results. Finally, the paper is concluded in Sec. \uppercase\expandafter{\romannumeral 5}.

\subsection{Related work}
\textbf{Video delivery across different layers:}
Studies on video delvery are carried out in the PHY, MAC, transport, and APP layers of the communication network. Each layer employs distinct strategies and metrics for adaptive control. 
In the PHY layer, link adaptation is the process of adaptively selecting modulation and channel coding rate (MCS) guided by channel feedback to enhance channel throughput and minimize transmission errors. The default 3GPP \cite{3gpp} method of link adaptation, known as inner loop link control (ILLA), employs a pre-established SNR-MCS lookup table to determine the choice of MCS based on real-time SNR estimates. To cope with SNR variations and delays, outer loop link adaptation (OLLA) is proposed by adding an extra offset to the SNR estimate, which is adjusted according to positive/negative acknowledgements (ACK/NACKs) received in the past transmissions \cite{OLLAdes}. More recently, \cite{Fan2011MCSSF, Peralta2022OuterLL, BayesianLA} employ either a rule-based or learning-based strategy for MCS adaptation based on multi-dimensional channel states, rather than just the SNR, to further boost the performance of link adaptation. However, the impact of link adaptation on throughput is localized as it only affects the size of each individual RB. \cite{MAXMINres} further requires resource allocation at MAC to make fine adjustments to the number of RBs. Existing resource allocation schemes are either based on proportional fairness or round-robin strategy \cite{globecom, RLresource}, which achieves equitable distribution by minimizing differences in resource allocated to different users. They largely ignore the video information from the APP layer.


The APP layer has consistently been a focus in video delivery, primarily because it enables macro-level control over video quality and bitrate. Based on the architecture of traditional HTTP adaptive streaming, a collection of ABR approaches have been proposed to ameliorate user’s QoE. They can be classified as heuristic-based and learning-based. Heuristic-based ABRs \cite{festive, bola, mpc, oboe, cs2p} establish predefined rules that consider buffer status, estimated throughput, or both, to determine bitrate, providing a straightforward and effective solution. For instance, FESTIVE \cite{festive} predicts the future chunk throughput based on history measurements and selects the video birate that is closest to the predicted throughput. Since accurate throughput prediction is hard in reality, buffer-based ABRs \cite{bola, BBA} avoid direct throughput measurement by choosing the bitrate based on the playback buffer occupancy. To achieve a good trade-off between rate and buffer goals, MPC \cite{mpc} employs a hybrid strategy which first forecasts future chunk throughput using past data and solves the optimization problem for delivering upcoming chunks by integrating both bitrate and buffer considerations. On the other hand, learning-based ABRs \cite{pensieve, comyco, zwei} utilize advanced deep learning techniques, making the derived methods applicable to a wider variety of situations. Pensieve \cite{pensieve} employs deep reinforcement learning (DRL)-based framework for ABR by optimizing the QoE function defined in \cite{mpc}.


Although these studies have established remarkable control mechanisms, they lack integration of and interaction across different layers. This has motivated the study of optimization strategies that span multiple layers.

\textbf{Cross-layer designs:}
As the bitrate decisions at APP form the foundation for video delivery, most cross-layer optimization methods in video delivery involve the integration of APP with other lower layers \cite{haochen, MM1}.
In the following, we'll review cross-layer designs that merge APP with the wireless link. This consists of two aspects: integrating ABR at APP with MAC and PHY.

Current cross-layer work with MAC adds the constraint of limited resources within the adaptive streaming process, thereby achieving optimized resource allocation. For instance, \cite{RLresource} introduces an DRL-based resource allocation strategy to tradeoff QoE fairness among different users. \cite{elsevier} proposes a pseudo-polynomial time optimal algorithm to adaptively adjust video encoding configuration under given resource constraint. 
\cite{index} jointly optimizes the multi-user access and video bitrate under a large number of users associated with limited transmission time slots. Meanwhile, \cite{xr} deploys a multi-step Deep Q-network for frame-priority-based wireless resource scheduling and a transformer-based proximal policy optimization (TPPO) for video bitrate adaptation. \cite{appslice} proposes an adaptive GoP-level FEC allocation strategy based on the video slice header information. But it does not include the adaptation for video bitrate. Nonetheless, these MAC-APP cross-layer designs either consider bitrate adaptation under limited resources or re-design resource allocation given a fixed APP information. Given the complexity of channel variations, PHY is often referred to as "the last mile transmission", showing their significant impact on video delivery. Consequently, many studies have been devoted to optimizing the video delivery with the PHY layer information. \cite{pi} enables continuous monitoring of the LTE base station's cellular data to estimate the throughput, which is used to optimize throughput prediction in the APP layer.
Similarly, \cite{crossbmsb, Prius, ofdma, access} considers video delivery on the physical downlink shared channel (PDSCH) and makes bitrate decisions based on link throughput measurement. \cite{xiong} explores adaptive video delivery for multiple users across a time-varying and mutually interfering multi-cell wireless network. They consider the impact of the video content complexity on resource allocation and conducts offline training of the ABR agent based on PHY network throughput traces. \cite{appslice} dynamically selects the MCS for video block transmission under given video source to maximize user experience. Cross-layer designs integrating PHY and APP face similar problems as those combining MAC. For instance, in \cite{xiong}, they first collect network traces under their proposed resource allocation scheme and then use the collected trace for offline training of the ABR agent. The lack of interaction between layers in existing cross-layer methods highlights the need for a more comprehensive evaluation platform for cross-layer designs in video delivery.









\section{Cross-layer QoE Maximization}

\subsection{Framework}
We consider the single-user 5G video delivery scenario shown in Fig. \ref{fig:str}, as the multi-user case inevitably involves a tradeoff between the fairness and each user's QoE. Later, we are going to show how our cross-layer design can be extended to the scenario with multiple users. 

The system includes a DASH server and a user equipment (UE). It has three layers, PHY, MAC and APP. At APP, the server stores several video files, each is divided into smaller chunks of $T_{\rm chunk}$ seconds. Without loss of generality, we assume there are one video file consists of $D$ chunks, indexed as $\{1,2,...,D\}$.
Each video chunk is encoded into $L$ different representations (i.e., quality levels), with $a_{i, l}\ (i= 1,2,...,D; l\in\{1,2,...,L\})$ denoting the $l$-th bitrate for chunk $i$. The available representations for each video chunk are specified in the Media Presentation Description (MPD) file, which is periodically updated. DASH users can access the MPD file by sending an HTTP request. Upon requesting the download of a new chunk, the requested chunk is delivered over the 5G PDSCH. The throughput of this link is jointly affected by the allocation of RBs at MAC and the MCS at PHY. Prior to a new download request, UE needs to make the bitrate decision for the chunk to be delivered next, based on link throughput and playback information.
\begin{figure*}[!h]
    \centering
    \includegraphics[width = 0.99\textwidth]{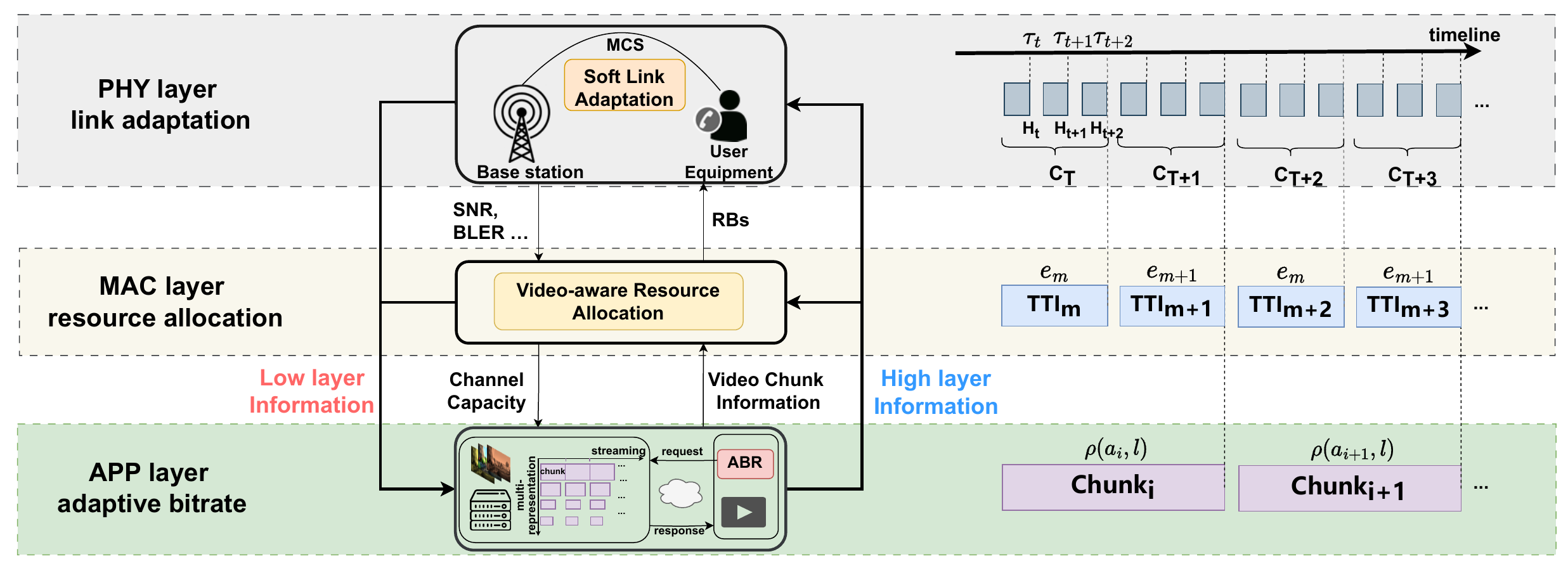}
    \caption{A schematic diagram of the cross-layer adaptive video delivery system.}
    \label{fig:str}
\end{figure*}

\subsection{Wireless Link}
We consider a single-link multipath channel with slow block fading. The channel impulse response remains stationary during one time slot \cite{eOLLA}. For time slot $t$, the user sends one transport block (TB) of several OFDM symbols, which are collected in the columns of $ \boldsymbol{\rm x}_t$. Assuming a time-varying but linear channel model with channel coefficient matrix $\boldsymbol{\rm H}_t$, normalized symbols and additive Gaussian noise $\boldsymbol{\rm n}_t$. we can write the received TB at the UE as
\begin{equation}
 \boldsymbol{\rm y}_t = \boldsymbol{\rm H}_t \boldsymbol{\rm x}_t + \boldsymbol{\rm n}_t.
 \label{1}
\end{equation}

With the 5G PDSCH, data streams from upper layers are encapsulated into separate TBs for "the last mile transmission". Each TB is composed of several RBs, and RB is the smallest allocation unit in the time-frequency domain. The scheduler in MAC determines the number of RBs at each transmission time interval (TTI), typically at intervals that are multiples of the time slots (i.e. 1 TTI = $m$ time slots). The number of allocated RBs $e_t$ at time slot $t$ must satisfy
\begin{equation}
     0 \leq e_t \leq e_{max}, 
     \label{RB}
\end{equation}
where $e_{max}$ is the total number of available RBs. Clearly, the number and size of allocated RBs directly affects the size of the TB. In particular, the size of each TB is determined by the MCS. MCS is selected by the base station at each time slot based on signal-to-noise ratio (SNR) estimation from a UE. As specified in \cite{3gpp}, every MCS is associated with a SNR range in a lookup table ($\mathcal{T} = \{\tau_1, \tau_2, ..., \tau_T\}$). 

We use the exponential effective SNR mapping (EESM) \cite{EESM} to calculate the SNR $\gamma_t$ of a TB as
\begin{equation}
    \gamma_t = \beta \ln\left(\frac{1}{N} \sum_{k=1}^{N} \exp\left(\frac{||\mathbf{\hat{h}}_k||^2/\left(y_{t,k} - \mathbf{\hat{h}}_{t,k}^T \mathbf{x}_t\right)^2}{\beta}\right)\right),
\end{equation}
where $y_{t,k}$ is the $k$-th element in $\mathbf{y}_t$. $\mathbf{\hat{h}}^T_{t,k}$ is the $k$-th row of 
$\hat{\boldsymbol{\rm H}_t}$, which is the estimated channel state information (CSI) and can be acquired via channel estimation. $\beta$ is a tuning parameter. 

We employ a probabilistic formulation to model the link adaptation strategy. Specifically, let $p_{\tau,\gamma}$ be the joint density that indicates the probability of choosing MCS $\tau$ at SNR $\gamma$. In order to ensure reliable transmission, it is essential to impose the constraint
\begin{equation}
\sum_{\tau} \int_{\gamma} p_{\tau, \gamma} \eta_{\tau, \gamma} d\gamma \leq \eta_0.
    \label{error}
\end{equation}
The block error rate (BLER) $\eta_{\tau, \gamma}$ is the likelihood of incorrectly decoding TB with MCS $\tau$ at SNR $\gamma$. Eq. (\ref{error}) states that the average BLER of the system is expected to be below a predefined threshold $\eta_0$. Surpassing this threshold means an increase in the video packet error rate, causing decoding issues. The TB needs to be re-transmitted as long as its decoding is incorrect, thereby leading to increased transmission delay. The default link adaptation strategy is to directly select a MCS using the deterministic MCS lookup table, according to the SNR estimate. The wireless link also employs the hybrid automatic repeat request (HARQ) mechanism, which sends ACK or NACK signals to show whether each TB has been received successfully.
\begin{table*}[htbp]
\centering
\caption{Nomenclature.}
\begin{tabular}{|c|p{0.3\textwidth}||c|p{0.3\textwidth}|}
\hline
\textbf{Symbol} & \textbf{Description} & \textbf{Symbol} & \textbf{Description} \\ \hline
\multicolumn{4}{|c|}{\textbf{Video playback notations}}\\
\hline
$\mathcal{D} = \{1, 2, ..., D\}$ & The set of $D$ video streams. & $\{1, 2, ..., D\}$ & The set of video chunks in a video file. \\
$\phi_i$ & The rebuffering time for chunk $i$ & $T_{\rm chunk}$ & The duartion of a video chunk. \\

$C_i$ & The average throughput at chunk $i$. & $b_{i}$ & The buffer size at chunk $i$. \\

$t_i$ & The time to start downloading chunk $i$. & $d_i$ & The download time for chunk $i$. \\

$a_{i, l} \in \mathcal{A}_{D}$ & The bitrate of $i$-th chunk, $l$-th representation. & $\rho(a_{i, l})$ & The size of $i$-th chunk, $l$-th representation.\\
\hline
\multicolumn{4}{|c|}{\textbf{Link notations}}\\
\hline

$\boldsymbol{\rm X}_t = [1_t, 2_t, ..., x_t]$ & The sequence of transmitted symbols. & $\Phi$ & The threshold for discriminating soft-ACK. \\

$\boldsymbol{\rm Y}_t = [1_t, 2_t, ..., y_t]$ & The sequence of received symbols of $\boldsymbol{\rm X}_t$. & $\boldsymbol{\rm H}_t = [1_t, 2_t, ..., h_t]$ & The sequence of CSI. \\

$\mathcal{T} = \{\tau_1, \tau_2, ..., \tau_T\}$ & The set of all available MCSs defined in 3GPP. & $e_t$ & The allocated RB count at $t$. \\

$N_{\tau_t,e_t}$ & The size of TB assigned with $\tau_t$ and $e_t$. & $\gamma_t$ & The measured channel SNR. \\

$p_{\tau, \gamma}$ & The BLER under MCS $\tau$ and SNR $\gamma$. &    &  \\

$\eta_{\gamma_t, \tau_t}$ & The BLER under SNR $\gamma_t$ and MCS $\tau_t$. & $\Delta T_{\rm slot}$ & The length of a time slot in seconds. \\

$\delta_t$ & The OLLA offset at time $t$. & $\Delta_{\rm down}$, $\Delta_{\rm up}$  &  Outer loop link adaptation deviation parameters. \\
\hline
\end{tabular}
\label{tab:my_label}
\end{table*}
\subsection{Video playback}
We consider a situation where a user requests to watch a video file and downloads video chunks of the specified representations sequentially. Recall that $a_{i,l}$ is the bitrate chosen for the $i$-th chunk at $l$-th representation. Suppose at time $t_i$, the user initiates the downloading of video chunk $i$. We assume that the player will request the next chunk immediately after receiving the current one. That is, $t_{i+1} = t_{i} + d_{i}$, with $d_{i}$ denoting the downloading time for chunk $i$. Let $\rho(a_{i, l})$ be the chunk size under chunk representation \(a_{i, l}\). The downloading time \(d_i\) would be $d_{i} = \rho(a_{i,l}) / C_{i}$, where $C_{i}$ is the network throughput during the download process. $C_{i}$ can be found by computing the mean of the instantaneous transmission rates across all time slots while chunk $i$ is being downloaded as
\begin{equation}
    C_i = \frac{1}{t_{i+1}-t_{i}}\sum_{t = t_i}^{t_{i+1}}\frac{N_{\tau_t,e_t}}{\Delta  T_{\rm slot}} (1-\eta_{\tau_t, \gamma_t}).
\label{6}
\end{equation}
$N_{\tau_t,e_t}$ is the TB size under MCS $\tau_t$ and RB number $e_t$. $\Delta T_{\rm slot}$ is the duration for each time slot. The rightmost term of Eq. (\ref{6}) is actually the inverse of the average number of transmissions required for a successful chunk decoding. In practice, we do not know the exact value of $\eta_{\tau_t, \gamma_t}$ and it is replaced by the proportion of retransmissions during the chunk delivery. The network throughput $C_i$ is upper-bounded by the Shannon's capacity \cite{shannon2001mathematical} as
\begin{equation}
    C_i \leq \frac{1}{t_{i+1}-t_{i}}\sum_{t = t_i}^{t_{i+1}}e_t \cdot \text{log}_2\left (1+\frac{\gamma_t}{e_t}\right ).
    \label{7}
\end{equation}

The received video chunks are stored in the user playback buffer. The buffer occupancy when the user starts downloading chunk \(i\) is denoted as \(b_{i} \in [0, b_{\text{max}}]\). Rebuffering would occur when the buffer is depleted, indicating that the download time \(d_{i}\) exceeds the buffer level \(b_{i}\). Rebuffering time $\phi_i$ can thus be expressed as
\begin{equation}
 \phi_i = (d_{i} - b_{i})_{+} = \left(\frac{\rho(a_{i, l})}{C_{i}}-b_{i}\right )_{+},
 \label{9}
\end{equation}
where \((x)_+ = \max\{x, 0\}\). We can write the evolution of the playback buffer occupancy as
\begin{equation}
    b_{i+1} = (b_{i} - d_{i}) + T_{\text{\rm chunk}}.
    \label{10}
\end{equation}

\subsection{Problem Formulation}
To ensure a smooth playback, frequent quality variations and rebuffering events should be minimized while maximizing the bitrates of transmitted videos. We define the reward for measuring user QoE \cite{mpc} as 
\begin{equation}
\label{QoE}
\text{QoE}_{i} = a_{i,l} - \alpha |a_{i,l} - a_{i-1,l}| -  \beta \phi_{i}, 
\end{equation}
where $\alpha$ and $\beta$ are non-negative parameters used to adjust the importance of quality ﬂuctuations and rebuffering events in the QoE evaluation.

With the user mobility and channel fading, the wireless channel conditions between base stations and mobile users vary over time. In contrast, watching a video clip typically takes a relatively longer time. To ensure a high QoE for the user over this extended period, we formulate the overall QoE optimization for video delivery as
\begin{equation*}
    \begin{aligned}
        &\rm \text{\textbf{P1:}} &&\underset{\boldsymbol{\tau},\boldsymbol{e},\boldsymbol{a}}{\rm \text{max}}
         \sum^{D}_{i = 1}\text{QoE}_{i}\\
        &\rm \text{\textbf{s.t.}}
        & & \text{Constraints from Eqs.}\  (1), (2), (4), (6), (7)\text{, and } (8). \\
    \end{aligned}
\end{equation*}
Decision variables for this problem are 1) MCSs at PHY: $\boldsymbol{\tau} = \{\tau_t\}_{t = 1,2,...}$, adjusted every time slot; 2) number of RBs at MAC $\boldsymbol{e} = \{e_t\}_{t = m,2m,...}$, adjusted every TTI and 3) video bitrate at APP $\boldsymbol{a} = \{a_{i,l}\}_{i = 1,2,...,D;\ l \in \{ 1, ... , L\}}$, adjusted every chunk. This is a NP-hard problem and involves three decision spaces with varying timescales. To simplify the above cross-layer optimization problem, we decompose the it into three parts, namely (PHY) link adaptation, 
(MAC) resource allocation and (APP) bitrate adaptation.

\section{Design of StreamOptix}
\subsection{System Overview}
With the proposed StreamOptix system, at the PHY layer, channel capacities are measured uniformly in time to produce the regularly sampled sequence (RSS). The RSS is input to APP to determine the chunk bitrate using MPC. The optimal bitrate that maximizes the future chunks' QoE is found. The selected bitrate is then sent to lower layers to guide the wireless link control. Specifically, MAC allocates resource to maximize the rate utilization given the target video bitrate. The selected RB number is forwarded to PHY to constrain the link adaptation at smaller timescales and minimize the transmission errors. During the link adaptation process, PHY keeps gathering CSI to obtain link capacity measurements, which are sent to APP for future bitrate decisions. Our design thus leads to the establishment of a closed-loop control system whose modules (i.e., layers) operate at different timescales.

\subsection{PHY: Soft Link Adaptation}
According to Eq. (\ref{6}), the system throughput is limited by the link BLER in the sense that a large BLER could cause an increased number of retransmissions. In this case, transmitting high-bitrate videos may not improve the user QoE, as it could result in frequent rebuffering (see Eq. (\ref{9}) and (Eq. \ref{QoE})). 

To improve the quality of video delivery and achieve more robust link performance, an efficient link adaptation scheme is developed to cope with changing network conditions. In contrast to the chunk-level decisions made at APP, link adaptation aims at inducing macroscopic changes through small-scale adjustments. Hence, when designing link adaptation schemes, it is essential to take into account our targets (i.e. throughput or BLER) in terms of their average values. 

In order to maintain a low BLER while transmitting high-quality video, we formulate the link adaptation problem as
\begin{equation*}
    \begin{aligned}
        &\rm \text{\textbf{P2:}} &&\underset{\{\tau_t\}_{t = 1,2,...}}{\rm \text{max}}
         \sum_{t}N_{\tau_t, e_t}(1-\eta_{\gamma_t, e_t})\\
        &\rm \text{\textbf{s.t.}}
        & & \text{Constraints from Eqs. }  (1), (3), (4),\ \text{and}\ (6). \\
    \end{aligned}
\end{equation*}
The default 3GPP method of link adaptation employs the MCS lookup table to determine MCS based on SNR estimates. The MCS lookup table \cite{3gpp}, generated by 3GPP through offline simulations under various link conditions, identifies the MCS that maximizes throughput at different SNR levels. Nevertheless, the actual link often exhibits complex variations that prevent achieving the desired BLER performance using this table alone. To address this issue, OLLA was designed to dynamically incorporate an offset related to HARQ feedback into the estimated SNR, aiming to mitigate excessive BLER. While this method does lower BLER, it exhibits slow convergence. Therefore, we integrate a novel mechanism called soft-ACK with OLLA to further enhances its BLER convergence and efficacy.

\textbf{OLLA:} In the default link adaptation mode specified by 3GPP, the base station (BS) computes the SNR based on the CSI estimation at each time slot and subsequently chooses the MCS according to a predefined lookup table. However, the MCS decisions could be prone to inaccuracies due to channel fluctuations and CSI reporting delays. These lead to suboptimal performance and may not ensure fast BLER convergence. In order to address this weakness, OLLA is proposed to handle SNR inaccuracies by modifying the measured SNR values from the CSI report using an estimated offset $\delta_{t}$ before linking them to the MCS index. 
Specifically, OLLA fine-tunes the SNR estimate using
\begin{equation}
    \hat{\gamma}_t = \gamma_{t} - \delta_{t}.
\end{equation}
Based on the HARQ feedback, the offset $\delta_{t}$ is updated as
\begin{equation}
\delta_t = \begin{cases} 
\delta_{t-1} + \Delta_\text{up}, & \text{if NACK} \\
\delta_{t-1} - \Delta_\text{down}, & \text{if ACK}.
\end{cases}
\end{equation}
Each ($\Delta_\text{up}, \Delta_\text{down}$) pair corresponds to different BLER convergences and is set according to the specific requirement and link conditions. When the UE HARQ feedback is NACK, the offset is incremented; conversely, it is decremented. 

In the following, the convergence of the OLLA algorithm under stationary link conditions, as well as the conditions that $\Delta_\text{down}$ and $\Delta_\text{up}$ should be satisfied in order to achieve convergence, is studied.
\begin{Theorem}
The BLER of OLLA will converge to $\frac{1}{1+\frac{\Delta_\text{\rm up}}{\Delta_\text{\rm down}}}$ if $\Delta_\text{\rm down} + \Delta_\text{\rm up} < \frac{2e}{1.11}$.
\end{Theorem}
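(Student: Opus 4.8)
The plan is to read the OLLA update as a stochastic recursion driven solely by the HARQ outcome: $\delta_t = \delta_{t-1} + \Delta_{\text{up}}\,\mathbf{1}[\text{NACK}] - \Delta_{\text{down}}\,\mathbf{1}[\text{ACK}]$. Under stationary link conditions the instantaneous block-error probability is a fixed, monotonically decreasing function of the effective SNR $\hat\gamma_t = \gamma_t - \delta_t$; writing it as $\eta(\delta_t)$, a NACK occurs with probability $\eta(\delta_{t-1})$ and an ACK with probability $1-\eta(\delta_{t-1})$. The conceptual heart of the statement is that, once the offset has settled, the expected per-step drift of $\delta_t$ must vanish, and this drift balance pins down the limiting BLER.

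First I would establish the \emph{value} of the limit. Taking the conditional expectation of the update gives $\mathbb{E}[\delta_t - \delta_{t-1}\mid \delta_{t-1}] = \Delta_{\text{up}}\,\eta(\delta_{t-1}) - \Delta_{\text{down}}\,(1-\eta(\delta_{t-1}))$, and setting this drift to zero yields the equilibrium $\eta(\delta^\star) = \frac{\Delta_{\text{down}}}{\Delta_{\text{up}}+\Delta_{\text{down}}} = \frac{1}{1+\Delta_{\text{up}}/\Delta_{\text{down}}}$, exactly the claimed limit. The cleaner route, which avoids assuming a limit a priori, is the telescoping identity $\delta_T - \delta_0 = \Delta_{\text{up}}\,N_{\text{NACK}}(T) - \Delta_{\text{down}}\,N_{\text{ACK}}(T)$ with $N_{\text{NACK}}(T)+N_{\text{ACK}}(T)=T$: provided $\delta_T$ stays bounded, dividing by $T$ and letting $T\to\infty$ forces the empirical NACK frequency (the realized BLER) to $\frac{\Delta_{\text{down}}}{\Delta_{\text{up}}+\Delta_{\text{down}}}$. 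This reduces the theorem to a \textbf{stability} statement: the offset must not run off to infinity.

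The bulk of the work is therefore to show that the step-size condition guarantees this stability. I would use a Lyapunov argument on $V(\delta) = (\delta - \delta^\star)^2$ and show it is a supermartingale outside a bounded set, so that $\delta_t$ remains bounded in the mean-square sense and the telescoping quotient indeed vanishes. The drift of $V$ splits into a restoring first-order term $2(\delta_{t-1}-\delta^\star)\cdot[\text{mean drift}]$, which points inward because $\eta$ is decreasing, and a variance term of order $(\Delta_{\text{up}}+\Delta_{\text{down}})^2$ from the squared increment. Expanding the mean drift around $\delta^\star$ introduces the slope $\eta'(\delta^\star)$, and the competition between the restoring and variance terms yields a net contraction precisely when the step magnitude times the steepest slope of $\eta$ stays below a fixed threshold, i.e. $(\Delta_{\text{up}}+\Delta_{\text{down}})\,\max_\delta|\eta'(\delta)| < 2$.

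The main obstacle — and the step that manufactures the specific constant — is bounding the maximum steepness $\max_\delta|\eta'(\delta)|$ of the block-error-versus-SNR curve. Here I would substitute the exponential BLER model used for the link, differentiate it, and show its derivative attains a single interior extremum of value $\tfrac{1.11}{e}$, with $e$ entering through the exponential form of the model and $1.11$ as the numerical bound on the extremum. Plugging $\max|\eta'| = \tfrac{1.11}{e}$ into the contraction requirement $(\Delta_{\text{up}}+\Delta_{\text{down}})\max|\eta'| < 2$ collapses to $\Delta_{\text{down}}+\Delta_{\text{up}} < \frac{2e}{1.11}$, which closes the stability argument and hence the convergence claim. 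I expect the delicate points to be (i) that with a \emph{constant} step size the recursion converges only in the time-averaged / stationary-distribution sense rather than pointwise, so the theorem should be read as convergence of the realized BLER, and (ii) making the slope bound rigorous rather than heuristic, since everything downstream — including the exact constant $\tfrac{2e}{1.11}$ — rests on it.
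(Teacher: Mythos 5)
Your proposal is correct in substance and reproduces the two quantitative ingredients that the paper's proof rests on --- the drift-balance equation giving $\eta(\delta^\star)=\frac{1}{1+\Delta_{\rm up}/\Delta_{\rm down}}$, and the bound $\max_\delta|\eta'(\delta)|=\frac{1.11}{e}$ on the slope of the fitted BLER-versus-offset curve, which together produce the constant $\frac{2e}{1.11}$ --- but it reaches the conclusion by a genuinely different route. The paper passes to the conditional expectation of the update, obtaining the deterministic recurrence $T(\delta)=\delta+\eta_t\Delta_{\rm up}-(1-\eta_t)\Delta_{\rm down}$, and then applies the Banach fixed-point theorem: the self-mapping property is argued in the main text from the saturation of $\eta$ at $1$ and $0$ near $\delta_{\min}$ and $\delta_{\max}$, and the contraction condition $|T'(\delta)|<1$, with $T'(\delta)=1+(\Delta_{\rm up}+\Delta_{\rm down})\,\eta_t'(\delta)$ and the model $\eta_t(\delta)=1/(1+e^{-\alpha_0\delta-\alpha_1})^s$, $\alpha_1=-1.11$, is verified in Appendix B; this is exactly your requirement $(\Delta_{\rm up}+\Delta_{\rm down})\max_\delta|\eta'(\delta)|<2$. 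You instead treat the random recursion directly: the telescoping identity pins down the limit of the realized (time-averaged) BLER given only boundedness of $\delta_t$, and a Lyapunov/supermartingale argument supplies stability. Your route buys honesty about the stochasticity --- it avoids the paper's implicit Jensen-type step of pushing expectations through the nonlinear $\eta$, and it makes explicit that with constant step sizes only time-averaged or stationary-distribution convergence can hold, which is the correct reading of the theorem. The paper's route buys brevity and a sharper account of where the step-size condition is load-bearing: it is precisely the contraction condition at the fixed point. In your sketch that role is murkier, and this is the one point you should tighten: boundedness of $\delta_t$ (all your telescoping argument needs) already follows from the restoring drift at the extremes where $\eta$ saturates, independently of the step sizes, so the condition $\Delta_{\rm up}+\Delta_{\rm down}<\frac{2e}{1.11}$ should be presented as what keeps the mean dynamics attracted to $\delta^\star$ (ruling out divergent oscillation of the deterministic part around the equilibrium), not as what makes the supermartingale property hold far from $\delta^\star$.
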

\begin{proof}
Let $\rm I(.)$ denote the indicator function. We can rewrite (10) and (11) as
\begin{equation}
    \delta_t = \delta_{t-1} + \rm{I(NACK)} \Delta_\text{up} - \rm{I(ACK)} \Delta_\text{down}.
    \label{ollains}
\end{equation}
Taking expectation on both sides of (\ref{ollains}) yields the recurrence function for $\delta$, which is
\begin{equation}
   \label{reccurence equation}
   T(\delta) = \delta + \eta_t \Delta_\text{up} - (1- \eta_t) \Delta_\text{down},
\end{equation}
where $\eta_t$ is the average BLER at time $t$. 

According to \cite{eOLLA}, for a random sequence to converge, its recurrence function must satisfy both the first and second Banach fixed point conditions. 
The first Banach fixed point condition states that the recurrence function should be a contract mapping. That is, we require that
\begin{equation}
\begin{aligned}
T(\delta) \in \left[\delta_{\text {min }}, \delta_{\text {max }}\right], \forall \delta \in \left[\delta_{\text {min }}, \delta_{\text {max }}\right].
\end{aligned}
\end{equation}
Firstly, the values of $\eta_t$ are between 0 and 1. With the continuing reception of ACKs, there exists a sufficiently small $\delta_{t} = \delta_{\text{min}}$ such that an overestimate of the SNR is obtained, thereby yielding eventually $\eta_t = 1$ and an NACK being received. In this case, we have $T\left(\delta_{\min }\right)=\delta_{\min }+\Delta_{\rm up}$. Next, as $\delta_{t}$ increases, the value of $\eta_t$ starts to decrease until $\delta_{t}$ reaches a value $\delta_{\max }$ high enough to ensure that the SNR is underestimated and $\eta_t = 0$. In this case, we obtain $T\left(\delta_{\max }\right)=\delta_{\max }-\Delta_{\text {down }}$. In summary, it can be concluded that
\begin{equation}
\begin{aligned}
T\left(\delta \right) &\in \left[\delta_{\text {min }}+\Delta_{\rm u p}, \delta_{\text {max }}-\Delta_{\text {down }}\right] \\
&\in \left[\delta_{\text {min }}, \delta_{\text {max }}\right], \forall \delta \in \left[\delta_{\text {min }}, \delta_{\text {max }}\right].
\end{aligned}
\end{equation}
The second Banach fixed point condition states that 
\begin{equation}
 \left|T^{\prime}\left(\delta \right)\right|<1, \forall \delta \in\left[\delta_{\min }, \delta_{\max }\right].
\end{equation}
The proof for the second Banach fixed point condition is included in Appendix A. 

As a result, according to the Banach fixed-point theorem \cite{banach}, when $t$ approaches infinity, the recurrence equation in (\ref{reccurence equation}) converges to the unique solution, which is obtained through setting $T(\delta)=\delta$, 
\begin{equation}
\label{17}
    \underset{t\rightarrow \infty}{\text{lim}}\eta_t = \frac{1}{1+\frac{\Delta_\text{up}}{\Delta_\text{down}}}.
\end{equation}
\end{proof}

By accumulating the small adjustments triggered by ACKs and NACKs over time, OLLA eliminates the discrepancy between the target BLER and the estimated BLER. The offset $\delta_{t}$ will finally reach a steady state when the system converges to the BLER target. 

\textbf{Soft Link Adaptation:} Empirically, $\Delta_\text{down}$ is set to be a small value for OLLA to achieve the BLER target. However, this could result in a lengthy convergence period much longer than the duration of a chunk. Moreover, depending solely on OLLA for offset adjustments may not enable quick discovery of poor channel conditions, since OLLA reduces the SNR estimate only after receiving a NACK. Meanwhile, the network conditions may have deteriorated significantly. Therefore, it is crucial to develop an approach that achieves faster convergence and lower BLER. For this purpose, we redefine the ACK levels, making them dependent on the channel's metrics (i.e., BLER). This is referred to as soft-ACK. Soft-ACK divides the prediction intervals of a specific channel metric to categorize various ACK levels \cite{Nemeth2021SoftACKFB}. It captures more channel information by increasing the ACK granularity, thus attaining higher sensitivity to channel degradation. 


We introduce two levels of ACKs: high-margin-ACK and low-margin-ACK. High-margin-ACK is treated as a "good" ACK which triggers a decrease by $\Delta_{\rm down}$ to the SNR offset $\delta_t$. On the other hand, a low-margin-ACK is considered as a NACK, leading to an increase in the offset by $\Delta_{\rm up}$. The distinction between high- and low-argin ACKs is made using the BLER threshold $\Phi$ \cite{Nemeth2021SoftACK}. If the BLER is below this threshold $\Phi$, it indicates an effective ACK (High-margin-ACK). If not, the ACK is categorized as low margin, indicating a potential channel quality degradation. Given that the BLER for each TB is not directly available, we utilize an empirical equation to approximate the BLER $\hat{\eta}_{\tau_t, \gamma_t}$ at $t$. This approximation is reliant on the estimated SNR and the MCS, which is \cite{Jenqneng}
\begin{equation}
\hat{\eta}_{\gamma_t,\tau_t} = 1 - \frac{1}{2}\left (1+\text{erf}\left(\frac{\gamma_{t}-\gamma[\tau_t]+\alpha}{\sqrt{2}}\right)\right).
\label{blerest}
\end{equation}
Here, $\gamma_t$ is the estimated SNR at $t$, $\tau$ is the selected MCS. $\gamma[\tau_t]$ is the SNR threshold of MCS $\tau_t$ defined in \cite{3gpp} and $\alpha$ is a tunable parameter. 

The convergence of soft-ACK under stationary link conditions, as well as the conditions that should be satisfiedm is studied in the following.
\begin{Theorem}
The BLER of soft link adaptation will converge to a target that is smaller than $\frac{1}{1+\frac{\Delta_\text{\rm up}}{\Delta_\text{\rm down}}}$ if the following conditions are hold:
\begin{itemize}
    \item The threshold $\Phi$ for determining low-margin-ACKs is greater than or equal to $\frac{1}{1+\frac{\Delta_\text{\rm down}}{\Delta_\text{\rm up}}}$.
    \item $\Delta_\text{\rm up} < \frac{2e}{1.11}$. 
\end{itemize}
\end{Theorem}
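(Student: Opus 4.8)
The plan is to mirror the proof of Theorem~1 (the OLLA analysis) and track how the soft-ACK rule alters the recurrence for the offset $\delta$. First I would write the stochastic update in indicator form as in the OLLA case, but now with three mutually exclusive per-slot outcomes at offset $\delta$: a genuine NACK (decode failure), which raises $\delta$ by $\Delta_{\rm up}$; a low-margin-ACK (successful decode but estimated block error $\hat\eta_{\gamma_t,\tau_t} \geq \Phi$), which is also treated as a NACK and raises $\delta$ by $\Delta_{\rm up}$; and a high-margin-ACK ($\hat\eta_{\gamma_t,\tau_t} < \Phi$), which lowers $\delta$ by $\Delta_{\rm down}$. Writing $\eta(\delta)$ for the genuine block-error probability, $\ell(\delta)$ for the low-margin-ACK probability, and $h(\delta) = 1 - \eta(\delta) - \ell(\delta)$ for the high-margin probability, taking expectations gives the recurrence $T(\delta) = \delta + (\eta(\delta)+\ell(\delta))\Delta_{\rm up} - h(\delta)\Delta_{\rm down}$, which has the same algebraic shape as the OLLA recurrence but with the effective ``NACK'' probability inflated from $\eta$ to $\eta+\ell$.

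Next I would verify the two Banach fixed-point conditions exactly as in Theorem~1. The first (contraction into $[\delta_{\min},\delta_{\max}]$) follows from the boundary behaviour: for $\delta$ small, the estimated BLER is driven above $\Phi$ so only the $+\Delta_{\rm up}$ branches fire, while for $\delta$ large, $\hat\eta \to 0 < \Phi$ so only the $-\Delta_{\rm down}$ branch fires, giving $T(\delta)\in[\delta_{\min}+\Delta_{\rm up},\ \delta_{\max}-\Delta_{\rm down}]\subseteq[\delta_{\min},\delta_{\max}]$. For the second condition I would differentiate $T$ and bound $|T'(\delta)|<1$ using the empirical BLER approximation; this is where the step-size constraint appears, and the computation should collapse to $\Delta_{\rm up} < \frac{2e}{1.11}$ rather than the $\Delta_{\rm up}+\Delta_{\rm down}$ bound of Theorem~1, because near the operating point the rapidly varying $\delta$-dependence sits in the $+\Delta_{\rm up}$ (soft-NACK) branch while the high-margin branch contributes the slowly varying complement. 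Pinning down this cancellation precisely — showing the Lipschitz constant of the expected update is governed by $\Delta_{\rm up}$ alone — is the step I expect to be the most delicate, and it is the main obstacle.

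Finally I would locate the fixed point and read off the limiting BLER. Setting $T(\delta)=\delta$ gives the balance $(\eta^{*}+\ell^{*})\Delta_{\rm up} = h^{*}\Delta_{\rm down}$ with $h^{*}=1-\eta^{*}-\ell^{*}$, hence the \emph{effective} error rate satisfies $\eta^{*}+\ell^{*} = \frac{1}{1+\Delta_{\rm up}/\Delta_{\rm down}}$ — identical to the OLLA limit of Theorem~1 — and equivalently $h^{*} = \frac{1}{1+\Delta_{\rm down}/\Delta_{\rm up}}$. The genuine BLER is therefore $\eta^{*} = \frac{1}{1+\Delta_{\rm up}/\Delta_{\rm down}} - \ell^{*}$, which is strictly below $\frac{1}{1+\Delta_{\rm up}/\Delta_{\rm down}}$ provided the low-margin-ACK probability $\ell^{*}$ is strictly positive at $\delta^{*}$. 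The remaining task is to show that the first hypothesis, $\Phi \geq \frac{1}{1+\Delta_{\rm down}/\Delta_{\rm up}} = h^{*}$, is exactly what guarantees $\ell^{*}>0$ and keeps $\delta^{*}$ inside $[\delta_{\min},\delta_{\max}]$: requiring $\Phi$ to dominate the fixed-point high-margin probability forces a positive mass of successful decodes to be reclassified as low-margin (soft) NACKs, which is precisely the surplus $\ell^{*}$ subtracted from the OLLA target. Invoking the Banach fixed-point theorem as in Theorem~1 then yields convergence of $\eta_t$ to a limit strictly smaller than $\frac{1}{1+\Delta_{\rm up}/\Delta_{\rm down}}$.
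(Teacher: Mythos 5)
Your recurrence and fixed-point algebra do match the paper: writing $\ell=(1-\eta_t)(1-\eta(\Phi))$ and $h=(1-\eta_t)\eta(\Phi)$, where $\eta(\Phi)$ is the proportion of high-margin ACKs among ACKs (treated by the paper as a constant of the stationary channel), your $T(\delta)=\delta+(\eta+\ell)\Delta_{\rm up}-h\Delta_{\rm down}$ is exactly the paper's recurrence, and your identity $\eta^{*}+\ell^{*}=\frac{1}{1+\Delta_{\rm up}/\Delta_{\rm down}}$ is an equivalent (arguably cleaner) restatement of the paper's limiting BLER formula, with strictness equivalent to $\ell^{*}>0$, which the paper encodes as $\eta(\Phi)<1$. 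The genuine gap is in how you discharge the first hypothesis. In the paper's proof, that hypothesis (stated loosely on $\Phi$, but actually derived for $\eta(\Phi)$) \emph{is} the first Banach/contraction condition: at $\delta_{\max}$, where $\eta_t\to 0$, the expected drift is $-\eta(\Phi)\Delta_{\rm down}+(1-\eta(\Phi))\Delta_{\rm up}$, and requiring this to be nonpositive forces $1>\eta(\Phi)\geq\frac{1}{1+\Delta_{\rm down}/\Delta_{\rm up}}$. You instead declare contraction automatic (``for large $\delta$, $\hat\eta\to 0<\Phi$, so only the $-\Delta_{\rm down}$ branch fires''), which silently makes the ACK classification $\delta$-dependent and degenerate at the upper boundary, erasing precisely the upward pressure that the hypothesis exists to control; you then reassign the hypothesis to guaranteeing $\ell^{*}>0$. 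That implication is unsupported and directionally backwards: low-margin means $\hat\eta>\Phi$, so \emph{raising} $\Phi$ makes low-margin ACKs rarer, and a lower bound on $\Phi$ can never force $\ell^{*}>0$. Under your reading, the strict inequality --- the heart of the theorem --- is not established.

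The second hypothesis is also left unproven, and the mechanism you conjecture for it is not the one that works. In the paper's Appendix B, differentiating the recurrence with $\eta(\Phi)$ held constant gives $T'(\delta)=1+\eta(\Phi)(\Delta_{\rm up}+\Delta_{\rm down})\,\eta_t'(\delta)$; the sigmoid fit of $\eta_t(\delta)$ bounds $|\eta_t'(\delta)|$ by $1.11/e$; and the collapse to a condition on $\Delta_{\rm up}$ alone comes from combining this with the \emph{first} condition, i.e., evaluating $\eta(\Phi)(\Delta_{\rm up}+\Delta_{\rm down})$ at the minimal admissible value $\eta(\Phi)=\Delta_{\rm up}/(\Delta_{\rm up}+\Delta_{\rm down})$, where it equals $\Delta_{\rm up}$. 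The two hypotheses interact; there is no branch-wise ``cancellation'' that localizes the rapidly varying dependence in the up-branch. Indeed, in your own $\delta$-dependent formulation the derivative is $T'(\delta)=1+\left(\eta'(\delta)+\ell'(\delta)\right)(\Delta_{\rm up}+\Delta_{\rm down})$ (since $h'=-\eta'-\ell'$), so bounding the slope of the effective error probability as in Theorem 1 would reproduce the Theorem-1 constraint on the sum $\Delta_{\rm up}+\Delta_{\rm down}$, not the claimed constraint on $\Delta_{\rm up}$ alone; the step you flagged as the main obstacle would fail as you set it up.
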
 
\begin{proof}
We can write the transition for $\delta_t$ under soft link adaptation as 
\begin{align}
\delta_t = \delta_{t-1} &+ \mathrm{I}(\text{NACK}) \Delta_{\mathrm{up}} - \mathrm{I}(\text{ACK}) \Pr(\hat{\eta} \leq \Phi) \Delta_{\mathrm{down}} \nonumber \\
&+ \mathrm{I}(\text{ACK}) \Pr(\hat{\eta} > \Phi) \Delta_{\mathrm{up}}.
\end{align}
$\Pr(\hat{\eta}\leq \Phi)$ represents the probability that the estimated BLER, given in (\ref{blerest}), falls below the threshold $\Phi$. Under stationary channel conditions, $\Pr(\hat{\eta} \leq \Phi)$ can be reasonably approximated by the proportion of high-margin-ACKs $\eta(\Phi)$. The recurrence function for $\delta_t$ can be expressed as
\begin{equation}
\begin{aligned}
T(\delta) = \delta &+ \eta_t\Delta_{\rm up}
- (1-\eta_t)\eta(\Phi)\Delta_{\rm down} \\
&+ (1-\eta_t)(1-\eta(\Phi))\Delta_{\rm up}.
\label{soft dynamic}
\end{aligned}
\end{equation}
As in the proof of \textbf{Theorem 1}, the first and second Banach fixed point conditions should be satisfied to guarantee the convergence of BLER.   
As \(\delta_t\) becomes increasingly smaller, the SNR estimates rises until the BLER \(\eta_t\) reaches 1. This would lead to the same lower bound discussed in \textbf{Theorem 1}. On the other hand, as \(\delta_t\) increases and the value of \(\eta_t\) gradually drops to 0, the upper bound under soft link adaptation becomes
\begin{equation}
\delta_{\text{max}} - \eta(\Phi)\Delta_{\text{down}} + (1-\eta(\Phi))\Delta_{\text{up}}.
\end{equation}
In order to ensure that $T\left(\delta \right)$ is a contract mapping, we need to ensure the upper bound is smaller than $\delta_\text{\rm max}$, that is
\begin{equation}
\begin{aligned}
\delta_{\text{max}} - \eta(\Phi)\Delta_{\text{down}} + (1-\eta(\Phi))\Delta_{\text{up}} \leq \delta_{\text{max}} \\
\Rightarrow 1 > \eta(\Phi) \geq \frac{1}{1+\frac{\Delta_\text{\rm down}}{\Delta_\text{\rm up}}}.
\end{aligned}
\end{equation}

Proof for the second Banach fixed condition can be found in Appendix B.
When $t$ approaches infinity, we have that
\begin{equation}
    \underset{t\rightarrow\infty}{\text{lim}}\eta_t = \frac{(1 - \frac{1}{\eta(\Phi)})\frac{\Delta_\text{up}}{\Delta_\text{down}}+1}{1+\frac{\Delta_{up}}{\Delta_\text{down}}} < \frac{1}{1 + \frac{\Delta_\text{\rm up}}{\Delta_\text{\rm down}}}.
    \label{small}
\end{equation}
\end{proof}
Compared with (\ref{17}), the achieved BLER after convergence will be lower than that of OLLA. This analytically justifies the better performance of the proposed soft link adaptation scheme. By introducing BLER estimation, we utilized more channel information beyond raw ACK/NACK indicators for adjusting the channel SNR estimates. Furthermore, this method is simpler and exhibits stronger generalization capabilities than other link adaptation methods based on Bayesian inference or reinforcement learning (RL).

\subsection{MAC: Video-aware Resource Allocation}
The MAC layer efficiently employs a scheduler to manage the number of RBs allocated. As previously discussed, both the MCS and RB allocation influence the TB size, thereby affecting the link's throughput. Nevertheless, the MCS is determined by link adaptation, which must meet the BLER constraint. Adopting a more 'aggressive' MCS selection strategy will increase the block size, but it also enlarges the probability of erroneous block decoding, which in turn reduces the overall throughput. In contrast, resource allocation at the MAC layer affects the throughput with a minimal impact on BLER. Thus, optimizing resource allocation at the MAC layer can achieve better bandwidth control. During video streaming, the bitrate ladder shows significant gaps between consecutive blocks. These discontinuities coupled with imprecise throughput prediction result in a severe mismatch between the chunk bitrate and the actual throughput. Current resource allocation does not take into account the impact of throughput mismatches. Based on these considerations, we propose the following video-aware resource allocation that aligns the throughput with the chunk bitrate by acquiring the bitrate information from APP.

As the chunk length (with an order of seconds) is much longer compared to TTI (with an order of milliseconds) for resource allocation, it is challenging for MAC to quickly affect the bitrate decisions at APP. Therefore, we adopt a top-down optimization approach in which resource allocation at the MAC layer during the downloading of a chunk is constrained and guided by chunk bitrate selected by the APP layer. This aims to more closely align throughput the chunk bitrate to maximize rate utilization. Assuming $e_{max}$ resource blocks are available at the base station, we formulate an integer linear programming problem for video-aware resource allocation (VRA), which is
\begin{equation*}
    \begin{aligned}
        &\rm \text{\textbf{P3:}} &&\underset{\{e_t\}_{t = m,2m,...,d_i}}{\rm \text{\textbf{min}}}
         \sum^{D}_{i=1}|\sum^{d_i}_{t = 1}N_{\tau_t, e_t}(1-\eta_{e_t, \gamma_t}) - \rho(a_{i,l})|\\
        &\rm \text{\textbf{s.t.}}
        & & \text{Constraints from Eqs. } (1), (2), (3), (5),\text{and}\ (6). \\
    \end{aligned}
\end{equation*}
At time $t$ when decisions are made (at the start of each TTI), since the BLER $\eta_{e_t, \gamma_t}$ is unknown, we utilizes (\ref{blerest}) to estimate BLER $\hat{\eta}_{\gamma_t, \tau_t}$ from past observations. The TB size $N_0$ that best matches the bitrate of the chunk is calculated as
\begin{equation}
    N_{0} = \frac{\rho(a_{i,l})}{(1-\hat{\eta}_{\gamma_t, \tau_t})T_{\rm chunk}}.
\end{equation}
However, the number of RBs is an integer, while $N_0$ takes a positive real value. For practical implementation, we select $e_t$ from the set $\{0,1,...,e_{max}\}$, based on the known MCS $\tau_t$ from soft link adaptation, that generates the TB size $N_{\tau_t, e_t}$ closest to $N_0$. The simplicity of our approach lies in its algorithmic complexity being $O(n)$. While the estimation of $N_0$ may be rough, as the TTI of resource allocation increases during a single chunk download period, the total throughput tends to approach the bitrate of the downloaded chunk.
\subsection{APP: Adaptive Bitrate}
In APP, it is essential to make decisions regarding the chunk bitrate of a video file to optimize the overall QoE:
\begin{equation*}
    \begin{aligned}
        &\rm \text{\textbf{P4:}} &&\underset{\{a_{i,l}\}_{i = 1,2,...,D}}{ \text{max}}
         \sum^{D}_{i=1} \rm{QoE}_{\textit{i}}\\
        &\rm \text{\textbf{s.t.}}
        & & \text{Constraints from Eqs. }  (1), (5), (6),\ \text{and},\ (8).\\
    \end{aligned}
\end{equation*}

We employ the classic ABR strategy, MPC, as our bitrate decision module. Given that MPC employs a rate-based mechanism, it is able to fully leverage the benefits brought by our precise throughput prediction mechanism. Additionally, it combines the forecasted throughput in order to enhance the QoE for several upcoming chunks, demonstrating some extent of resilience. Nonetheless, this does not mean that our platform is not compatible with other ABR algorithms. We will demonstrate at the end of our paper that our cross-layer design brings various levels of enhancement to all current ABRs.\par

MPC gathers historical chunk throughput to forecast the throughput available for downloading several future chunks. It then conducts playback simulation based on the predicted throughput, selecting the first bitrate from the combination with the highest QoE scores as the decision bitrate for the forthcoming chunk. With the the abundance of timeslots in PHY transmission, the average channel capacities are calculated by measuring them over a set number of timeslots to creating a historical sequence that is uniformly distributed in time (e.g., every 500 slots). These measurements are used with a sliding window to compute the harmonic mean to predict the capacity of the upcoming channel condition. As mentioned previously, a significant advantage of utilizing measurements from PHY is the ability to create uniformly spaced throughput sequences (Fig. \ref{fig:mpctrace}), thereby removing the irregular time distribution caused by different download durations at chunk level measurement. 
\begin{figure}
    \centering
    \includegraphics[width=0.95\linewidth]{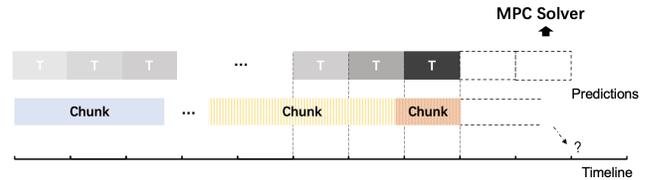}
    \caption{Uniformly spaced throughput in StreamOptix (upper) v.s. non-uniformly spaced throughput in chunk-level meaurement (bottom).}
    \label{fig:mpctrace}
\end{figure}


\subsection{Summary}
This section presents our proposed soft link adaptation at PHY, video-aware resource allocation at MAC, and MPC-S (MPC on StreamOptix) at APP. These mechanisms form a closed loop and reinforce each other. Unfortunately, this work does not extend to multi-user systems, primarily due to the incompatibility of \textit{MATLAB Simulink}'s custom link-level and system-level simulations. Moreover, multi-user scenarios introduce conflicts between individual optimality and fairness, making it challenging to determine the optimal solution. For completeness, we briefly discuss how StreamOptix can be extended to multi-user cases. For link adaptation, the adjustment needed is to replace SNR estimates with SINR (signal-to-interference and noise ratio) to account for the interference from other users. MAC is a little bit more complicated as it involves fairness among multiple users. 
Therefore, a two-layer resource allocation strategy can be considered, where the first layer employs the current multi-user resource allocation strategy, like round robin and proportional fairness, while the second layer adopts video-aware resource allocation. The ultimate resource distribution results from multiplying the ratios from both layers element-wise, using trade-off weights. APP does not require any further modification, as it makes bitrate decisions for each user separately.

\section{StreamOptix Implementation}
In this section, we describe the implementation of StreamOptix, as well as the extensive evaluation of our proposed cross-layer scheme over numerous metrics. To create StreamOptix, we use \textit{MATLAB} and \textit{Python} to construct the wireless link and video player respectively. 
We use \textit{matlab.engine} API to enable real-time communication between wireless link deployed in \textit{MATLAB}'s and the virtual player in \textit{Python}. This setup guarantees the separation of higher and lower layer functionalities, enhancing scalability of our platform and enabling future extensions.
\begin{table}
\caption{Parameter configuration for wireless link.}
\centering
\resizebox{0.42\textwidth}{!}{
\begin{tabular}{|c|c|c|} 
\hline
\multicolumn{2}{|c|}{Scenario and Configuration Parameter} & Value \\ 
\hline
\multicolumn{2}{|c|}{Channel Type} & TDL C\\
\multicolumn{2}{|c|}{Subcarrier Spacing} & 15 kHz\\ 
\multicolumn{2}{|c|}{Channel Bandwidth} & 20 MHz\\
\multicolumn{2}{|c|}{Resource Grid} & 0-52\\
\multicolumn{2}{|c|}{Maximum HARQ Retransmissions} & 3\\
\multicolumn{2}{|c|}{LDPC Maximum Decoding Iterations} & 10\\
\multicolumn{2}{|c|}{Transport Block Duration} & 1 ms\\
\multicolumn{2}{|c|}{Target BLER} & 0.1\\
\multicolumn{2}{|c|}{Step size $\Delta_{\rm up}$} & 0.4 dB\\
\multicolumn{2}{|c|}{Maximum deviation} & 3 dB\\
\multicolumn{2}{|c|}{Soft-ACK Threshold $\Phi$} & 0.92\\
\multicolumn{2}{|c|}{Throughput Prediction Interval} & 600ms\\
\hline
Channel Conditions & SNR & Doppler Shift \\
\hline
Static & -5 - 5 dB & 10 Hz \\
Dynamic & -5 - 5 dB & 50 Hz \\
High SNR & 5 - 15 dB & 10 Hz \\
\hline
\end{tabular}}
\label{tab:combined}
\end{table}

\begin{table*}[h!]
\centering
\captionsetup{font=small}
\caption{Evaluation of different ABRs on static/dynamic/high SNR channel conditions.}
\resizebox{0.99\textwidth}{!}{
\begin{tabular}{|c|c|c|c|c|c|c|c|c|c|c|c|c|c|c|c|c|}
\hline
\multicolumn{2}{|c}{\textbf{Metrics}} & \multicolumn{5}{|c}{\textbf{Static}} &  \multicolumn{5}{|c}{\textbf{Dynamic}} &  \multicolumn{5}{|c|}{\textbf{High SNR}}\\
\hline
\multicolumn{2}{|c|}{ABR} & FESTIVE & BBA & BOLA & HYB & \textcolor{red}{MPC-S}  & FESTIVE & BBA & BOLA & HYB & \textcolor{red}{MPC-S} & FESTIVE & BBA & BOLA & HYB & \textcolor{red}{MPC-S}  \\
\hline
\multirow{2}{*}{\textbf{PHY}} & BLER & 0.30 & 0.30 & 0.32 & 0.28 & \textbf{0.15}  & 0.397 & 0.412 & 0.415 & 0.32 & \textbf{0.19} & 0.214 & 0.273 & 0.198 & 0.191 &  \textbf{0.067} \\
\cline{2-17} 
  & BER & 0.008 & 0.004 & 0.006 & 0.0006 & \textbf{0.0002}  & 0.014 & 0.008 & 0.009 & 0.012 & \textbf{0.003} & 0.0013 & 0.0027 & 0.0017 & 0.0006 & \textbf{0.000} \\
\hline
\multirow{2}{*}{\textbf{VQA}} & PSNR &  30.50 & 29.52  & 29.37 & 27.52 & \textbf{33.01} & 27.41 & 27.48 & 27.56 & 27.18 & \textbf{32.14} & 34.64 & 33.26 & 33.56 & 36.09 &  \textbf{40.02} \\
\cline{2-17} 
& SSIM &  0.780 & 0.801 &  0.736 &  0.721  &  \textbf{0.86} & 0.74& 0.75& 0.77 & 0.69 & \textbf{0.843} & 0.88 & 0.83 & 0.84 &  0.91  &  \textbf{0.97} \\
\hline
\multirow{4}{*}{\textbf{APP}} & Average Bitrate &  4.49  &  4.27 & 3.99  & 4.20 & \textbf{4.81} & 3.84 & 3.48 & 3.52 & 3.84 & \textbf{3.85} & 4.47 & 4.14 & 4.25 & 4.48 & \textbf{4.99} \\
\cline{2-17} 
& Average Rebuffering & 0.082 & 0.053 &  \textbf{0.008} & 0.056 & 0.018 & 0.169 & 0.019 & \textbf{0.014} & 0.096 & 0.02 & 0.072 & \textbf{0.0082} & 0.019 & 0.032 & 0.015  \\
\cline{2-17} 
& Average QoE & 2.85 & 3.05 & 2.16 & 2.78 & \textbf{4.11} & 1.47 & 1.49 & 1.66 & 1.74 & \textbf{2.01} & 2.65  & 2.74 & 2.69  & 3.44 &  \textbf{4.63} \\
\hline
\end{tabular}}
\label{results}
\end{table*}

\begin{figure*}
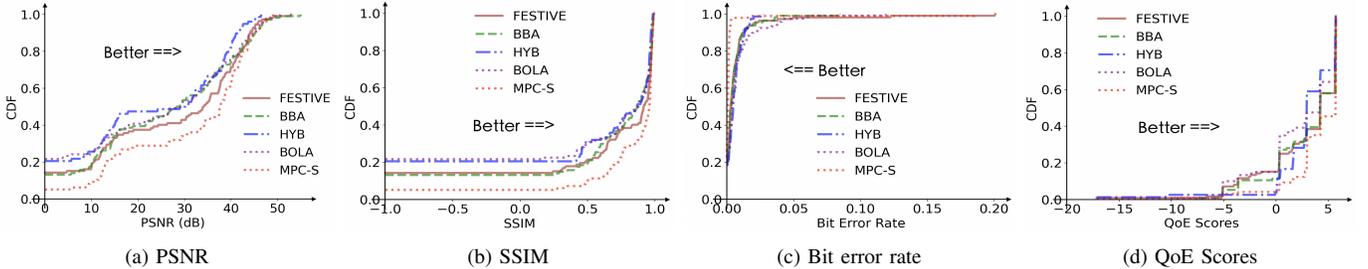

    \begin{subfigure}{0.499\columnwidth}
    \includegraphics[width=\linewidth,height = 0.7\linewidth]{photo/stat_psnr.pdf}
    \caption{PSNR}
    \label{fig:2b}
  \end{subfigure}
  \begin{subfigure}{0.499\columnwidth}
    \includegraphics[width=\linewidth,height = 0.7\linewidth]{photo/stat_ssim.pdf}
    \caption{SSIM}
    \label{fig:2c}
  \end{subfigure}
  \begin{subfigure}{0.499\columnwidth}
    \includegraphics[width=\linewidth,height = 0.7\linewidth]{photo/stat_BIT.pdf}
    \caption{Bit error rate}
    \label{fig:2d}
  \end{subfigure}
    \begin{subfigure}{0.499\columnwidth}
    \includegraphics[width=\linewidth,height = 0.7\linewidth]{photo/stat_QoE.pdf}
    \caption{QoE Scores}
    \label{fig:2e}
    \end{subfigure}
  \vspace{-0.2cm}
    \caption{Evaluation on static link.}
    \vspace{-0.5cm}
    \end{figure*}

\subsection{Simulation Setting}
A 5G PDSCH is configured in \textit{Matlab R2022b} to simulate the multipath propagation and Doppler shift of wireless link. Channel type is set as tapped delay line C and LDPC is deployed for channel coding. The SNR is adjusted between -5 dB and 15 dB and the doppler shift is set from 10 to 50Hz to mimic various channel scenarios, specifically static, dynamic, and high SNR. Specifics of the simulation setup are provided in Table \ref{tab:combined}. It is worth noting that our parameter settings for OLLA offset $\Delta_{\text{up}}, \Delta_{\text{down}}$ and soft-ACK threshold $\Phi$ strictly adhere to our theoretical analysis in Section III.

\textbf{Methods:} For ABRs, we choose four commonly used ABR techniques as baselines. FESTIVE \cite{festive}, a rate-based approach, predicts trhoughput based on historical measurements and selects the closest corresponding bitrate. BBA (buffer-based strategy) makes decisions according to the buffer level. HYB (Hybrid) takes both buffer level and throughput prediction into consideration for better trade-off. BOLA employs Lyapunov optimization to derive an optimal buffer-based strategy. MPC-S is our proposed scheme, which is short for MPC on StreamOptix. As current \textit{MATLAB simulink} takes too long for link simulation, our comparison does not include learning-based ABRs, which trains ABR agent offline based on simulation interactions. 
\begin{table}
\captionsetup{font=small}
\caption{Methods evaluation details, VRA stands for video-aware resource allocation and 3GPP stands for default 3GPP MCS lookup.}
\centering
\resizebox{0.499\textwidth}{!}{
\begin{tabular}{|c|c|c|c|c|c|c|c|} 
\hline
\textbf{Methods} & \textbf{MPC-S} & \textbf{MPC-a} & \textbf{MPC-b} & \textbf{FESTIVE} & \textbf{BOLA} & \textbf{BBA} & \textbf{HYB} \\
\hline
\textbf{APP} & MPC & MPC & MPC & FESTIVE & BOLA & BBA & HYB\\ 
\hline
\textbf{MAC} & VRA & Full grid & Full grid & Full grid & Full grid & Full grid & Full grid \\ 
\hline
\textbf{PHY} & \makecell{Soft link \\ adaptation} & \makecell{Soft link \\ adaptation} & 3GPP & 3GPP & 3GPP & 3GPP & 3GPP \\ 
\hline
\end{tabular}}
\vspace{-0.5cm}
\label{tab:met}
\end{table}

To highlight the gain achieved by our cross-layer design, we configure all ABR baseline's link adaptation scheme to the 3GPP default link adaptation scheme and implement resource allocation with a full grid (RB count stays at $e_{max}$ during throughout transmission). MPC-a and MPC-b are ablation versions for MPC-S that sequentially remove soft link adaptation and video-aware resource allocation strategies. Details about method configuration is illustarted in Table \ref{tab:met}.

  \begin{figure*}
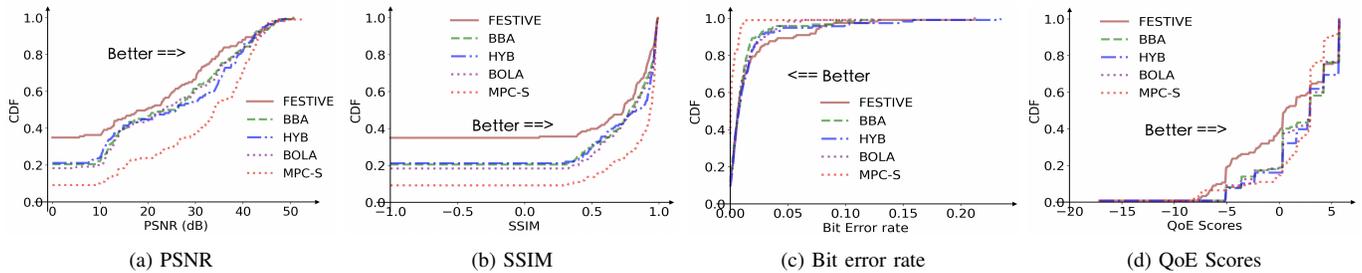

  \begin{subfigure}{0.499\columnwidth}
    \includegraphics[width=\linewidth,height = 0.7\linewidth]{photo/low_psnr.pdf}
    \caption{PSNR}
    \label{fig:2b}
  \end{subfigure}
  \begin{subfigure}{0.499\columnwidth}
    \includegraphics[width=\linewidth,height = 0.7\linewidth]{photo/low_SSIM.pdf}
    \caption{SSIM}
    \label{fig:2c}
  \end{subfigure}
  \begin{subfigure}{0.499\columnwidth}
    \includegraphics[width=\linewidth,height = 0.7\linewidth]{photo/low_BIT.pdf}
    \caption{Bit error rate}
    \label{fig:2d}
  \end{subfigure}
    \begin{subfigure}{0.499\columnwidth}
    \includegraphics[width=\linewidth,height = 0.7\linewidth]{photo/low_QoE.pdf}
    \caption{QoE Scores}
    \label{fig:2e}
    \end{subfigure}
    \vspace{-0.2cm}
    \caption{Evaluation on dynamic link.}
       \vspace{-0.5cm}
    \end{figure*}

    \begin{figure*}
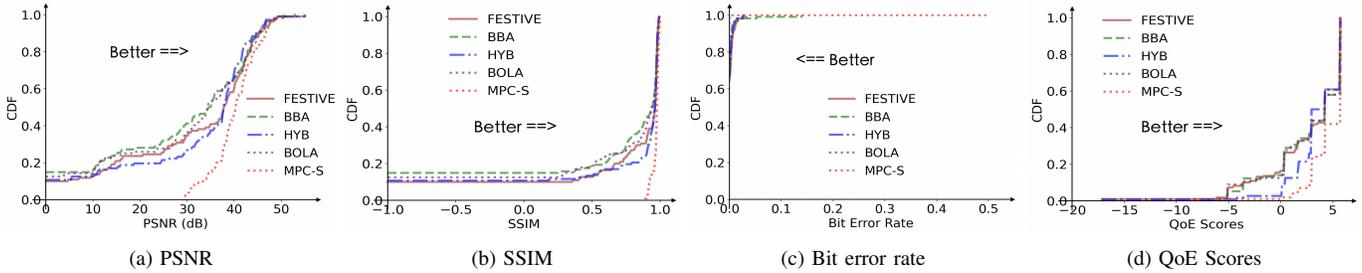

      \begin{subfigure}{0.499\columnwidth}
    \includegraphics[width=\linewidth,height = 0.7\linewidth]{photo/high_psnr.pdf}
    \caption{PSNR}
    \label{fig:2b}
  \end{subfigure}
  \begin{subfigure}{0.499\columnwidth}
    \includegraphics[width=\linewidth,height = 0.7\linewidth]{photo/high_SSIM.pdf}
    \caption{SSIM}
    \label{fig:2c}
  \end{subfigure}
  \begin{subfigure}{0.499\columnwidth}
    \includegraphics[width=\linewidth,height = 0.7\linewidth]{photo/high_BIT.pdf}
    \caption{Bit error rate}
    \label{fig:2d}
  \end{subfigure}
    \begin{subfigure}{0.499\columnwidth}
    \includegraphics[width=\linewidth,height = 0.7\linewidth]{photo/high_QoE.pdf}
    \caption{QoE Scores}
    \label{fig:2e}
    \end{subfigure}
      \vspace{-0.2cm}
    \caption{Evaluation on High SNR link.}
      \vspace{-0.2cm}
  \end{figure*}

\begin{figure*}
    \centering
    \begin{minipage}[t]{0.625\linewidth}
        \centering
        \includegraphics[width=\linewidth, height = 0.29\linewidth]{photo/rate.pdf}
         \caption{Chunk bitrate v.s. throughput on Static/Dynamic/High SNR.}
         \label{video trace}
    \end{minipage}
    \begin{minipage}[t]{0.365\linewidth}
        \centering
        \includegraphics[width=\linewidth, height = 0.47\linewidth]{photo/utilization.pdf}
        \caption{Evaluation on rate utilization.}
        \label{rate uti}
    \end{minipage}
    \vspace{-0.6cm}
\end{figure*}

\textbf{Dataset:} We employ the Waterloo Dataset \cite{waterloo} as the streaming source, which comprises twenty videos (including documentaries, sports, and gaming). These videos, each 10 seconds long, are offered in 11 distinct quality levels ranging from 240p to 1440p, encoded in H.264 format. A bitrate ladder is constructed from these 11 quality levels, set at [750, 1750, 2350, 3000, 4300, 7000] kbps. Each video is segmented into 2-second chunks using \textit{FFmpeg} with a Constant Rate Factor of 23. Every simulation run consists of one hundred video segments, culminating in a total viewing time of 200 seconds.

\textbf{Evaluation Metrics:} To evaluate ABR performance at APP, we analyze the video bitrate, rebuffering, and QoE scores. We compute the PSNR (Peak Signal-to-Noise Ratio) and SSIM (Structural Similarity Index) to assessing the quality of delivered videos. Additionally, we meaure the BLER, bit error rate, and rate utilization during video transmission to assess the performance of wireless link. Rate utilization for chunk $i$ is defined as $\frac{t_{i+1} - t_{i}}{T_{\rm chunk}}$, $t_{i}$ represents the time to start downloading chunk $i$. Rate utilization of 1 indicates the perfect match of video bitrate and transmission throughput. A value exceeding 1 suggests underutilization of the available throughput, whereas a value less than 1 indicates overutilization.

In the following sections, we first provide an overall comparison between MPC-S and other baseline methods across three different link conditions. We then analyze the effects of PHY-aware throughput prediction, video-aware resource allocation (VRA) and soft link adaptation separately to illustrate the gain of our cross-layer design. Finally, we verify that our cross-layer design could serve as a plug-in solution which could enhance the performance of any ABR. 



\subsection{Results on StreamOptix Streaming}
In the following, we evaluate MPC-S with other baseline methods across three different link environments: static, dynamic, and high SNR (configured according to Table \ref{tab:combined}). The cumulative distribution function (CDF) of SSIM, PSNR, bit error rate and QoE are illustrated in Fig. 5, 6 and 7. Extensive meaurements of all metrics are listed in Table \ref{results}.

\textbf{Static Scenario:} As shown in Fig. 5, MPC-S achieves a significant improvement in objective quality assessments (PSNR, SSIM) compared to other baselines. The improvement is attributed to our soft link adaptation, as confirmed by the bit rate error curve. Specifically, the intersections at y-axis for the SSIM and PSNR curves suggest the ratio of video decoding failures. This occurs when the essential metadata needed for decoding is corrupted during the transmission of the link. Remarkably, the MPC-S approach shows the lowest occurrence of decoding errors, around 0.05, subtly emphasizing the decrease in errors due to soft link adaptation. In contrast, HYB and BOLA show error rates as high as 20\%. The elimination of error rate can be attributed to the incorporation of our soft link adaptation, which enhances the adaptability of link transmission to prevailing channel conditions. 


\textbf{Dynamic Scenario:} In dynamic link, the frequent link variations lead to greater variations in throughput. According to Fig. 6, the SSIM and PSNR curves show nearly a twofold rise in the bit error rate across all methods. Results in Table \ref{results} demonstrate that the MPC-S improves QoE by 14\% relative to HYB, which is the most effective among all baselines. SSIM and PSNR of MPC-S reaches 0.843 and 32.14 dB, respectively.

\textbf{High SNR:} The high SNR link simulates the good channel condition with large throughput and small variations. In this case, the increase of SNR leads to a greater link capacity, resulting in improved performance across all methods. 
As observed in Fig. 7, MPC-S has completely eradicated video decoding errors, significantly enhancing the objective video quality, with PSNR achieving 40.02 dB and SSIM attaining 0.97. Despite the improvement in performance across all methods, MPC-S continues to show superior performance in nearly all metrics. According to Table \ref{results}, MPC-S outperforms the runner-up method with a 34\% enhancement in QoE, an 11\% improvement in PSNR, and an 8\% rise in SSIM.



\textbf{Rate Utilization:} Fig. \ref{rate uti} exhibits rate utilization of all methods. The results reveal that MPC-S has effectively manage the resource to maximize rate utilization, with recorded values of 0.95, 0.92, and 0.99. In contrast, other methods exhibit a 15\%-20\% shortfall due to the lack of video bitrate allocation. MPC-S also shows the smallest variance when compared to other methods. To visualize the effects of video-aware resource allocation (VRA), we construct a video bitrate trace with five chunks at rates of 3000kbps, 4300kbps, and 5800kbps. We adjust the SNR of wireless link to ensure that the link capacity marginally exceeds the bitrate in the constructed trace. Fig. \ref{video trace} illustrates the throughput measurements for each chunk, both with and without VRA implementation. The results indicate improved rate utilization when VRA is applied. Channel throughput demonstrates enhanced stability in both static and elevated SNR scenarios. Given that VRA is influenced by BLER prediction (Eq. (24)), the variance in dynamic link conditions increases as BLER becomes more unpredictable.

\section{Analysis and ablation study}
\subsection{PHY-aware Throughput Prediction}
As previously mentioned, our cross-layer scheme provides regularly sampled link capacities at PHY, which allows for more accurate predictions. Fig. \ref{MPC-time1} shows the impact of regularly sampled throughput prediction at different granularities. It is evident from the illustration that as the granularity decreases, the predicted throughput becomes increasingly conservative. Conversely, finer granularities may lead to higher overhead and more aggressive predictions. To determine the optimal prediction granularity that trade off these effects, we assess MPC's performance using granularities ranging from 150ms to 900ms. Throughput is predicted using harmonic mean of past 8 samples \cite{mpc}. The results presented in Table \ref{tab:mpc} and Fig. \ref{MPC-time2} indicate that with increasing granularity, the MPC's average bitrate consistently drops, likely due to more conservative predictions. At finer granularities, the bitrates are notably high, leading to significant rebuffering to deteriorate QoE. With coarser granularities, there is a bitrate reduction accompanied by less rebuffering, which helps mitigate the impact of lower bitrates. An optimal QoE is achieved at a granularity of approximately 600ms. After this threshold, additional reductions in rebuffering do not offset the decline in bitrate. Therefore, we set the granularity of throughput prediction as 600ms throughout our study. In Fig. \ref{MPC-time2}, the horizontal line depicts the performance of naive MPC \cite{mpc}, which predicts future throughput at chunk level. The results indicate naive MPC's QoE is approximately 30\% worse compared to that of MPC with 600ms prediction granularity.

\begin{figure}
\begin{subfigure}{0.49\columnwidth}
\includegraphics[width=\linewidth,height = 0.68\linewidth]{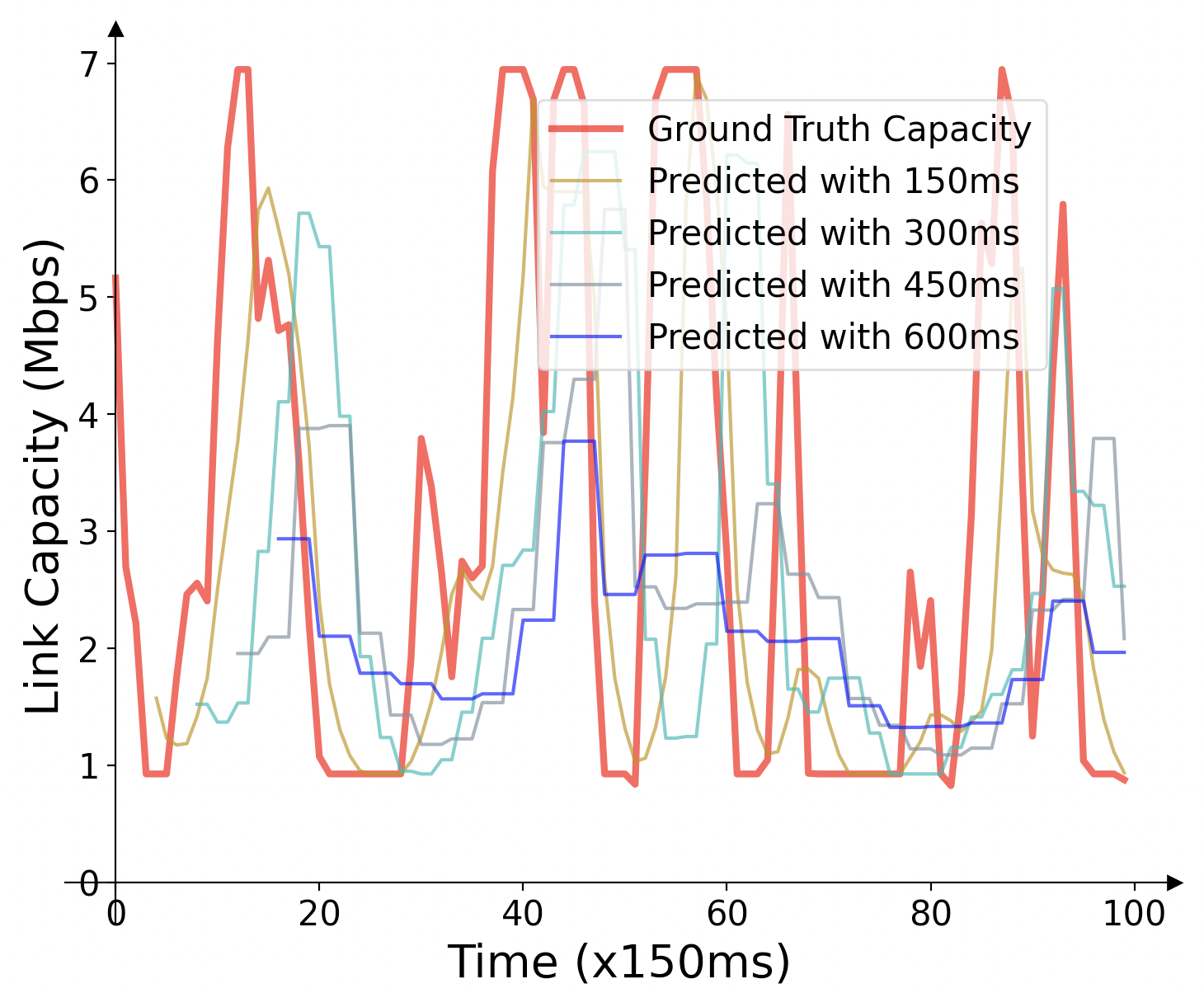}
  \caption{Predict link capacities with different prediction intervals}
 \label{MPC-time1}
\end{subfigure}
 \begin{subfigure}{0.5\columnwidth}
\includegraphics[width=\linewidth,height = 0.67\linewidth]{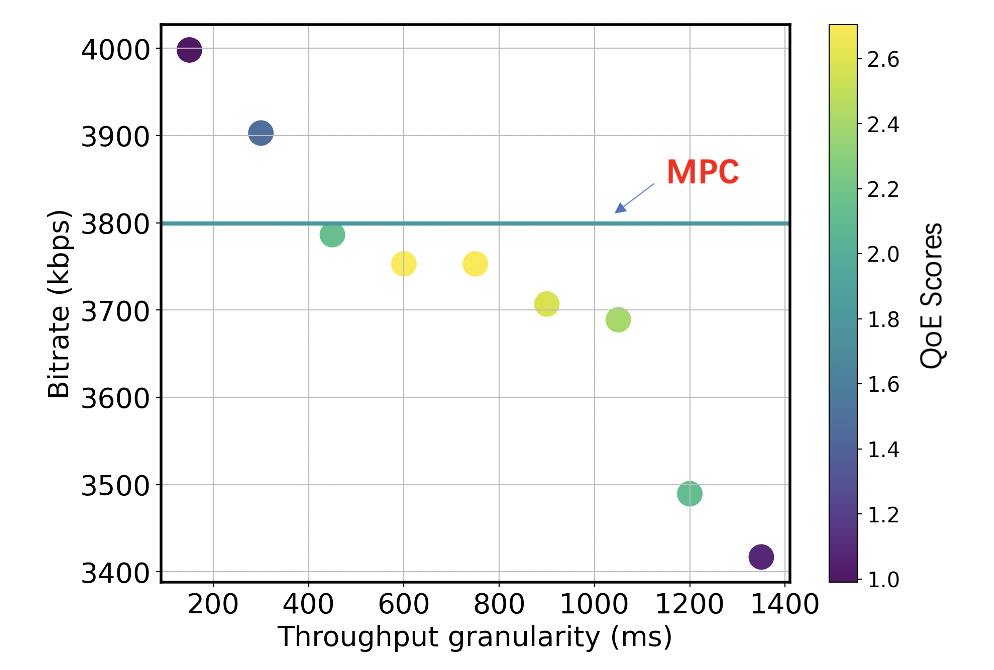}
  \caption{MPCs with different prediction intervals}
   \label{MPC-time2}
\end{subfigure}
\caption{Evaluation on PHY-aware throughput prediction.}
\end{figure}

\begin{table}[t]
\caption{Performance of MPCs on different prediction granularities.}
\centering
\resizebox{0.495\textwidth}{!}{
\begin{tabular}{|c|c|c|c|c|c|c|} 
\hline
\textbf{Granularity (ms)} & \textbf{150} & \textbf{300} & \textbf{450} & \textbf{600} & \textbf{750} & \textbf{900} \\
\hline
MSE Loss on prediction & \textbf{3.17} & 6.94 & 5.98 & 3.61 & 5.20 & 4.86\\ 
\hline
QoE Scores & 0.99 & 1.47 & 2.13 & \textbf{2.70} & 2.58 & 2.44\\ 
\hline
Bitrate (Mbps) & 3.998 & 3.903 & 3.789 & \textbf{3.753} & 3.707 & 3.689 \\ 
\hline
Rebuffer (s) & 0.205 & 0.122 & 0.058 & \textbf{0.017} & 0.04 & 0.059 \\ 
\hline
 \end{tabular}}
 \label{tab:mpc}
 \end{table}

\begin{figure}[t]
\begin{subfigure}{0.49\columnwidth}
\includegraphics[width=0.99\linewidth,height = 0.69\linewidth]{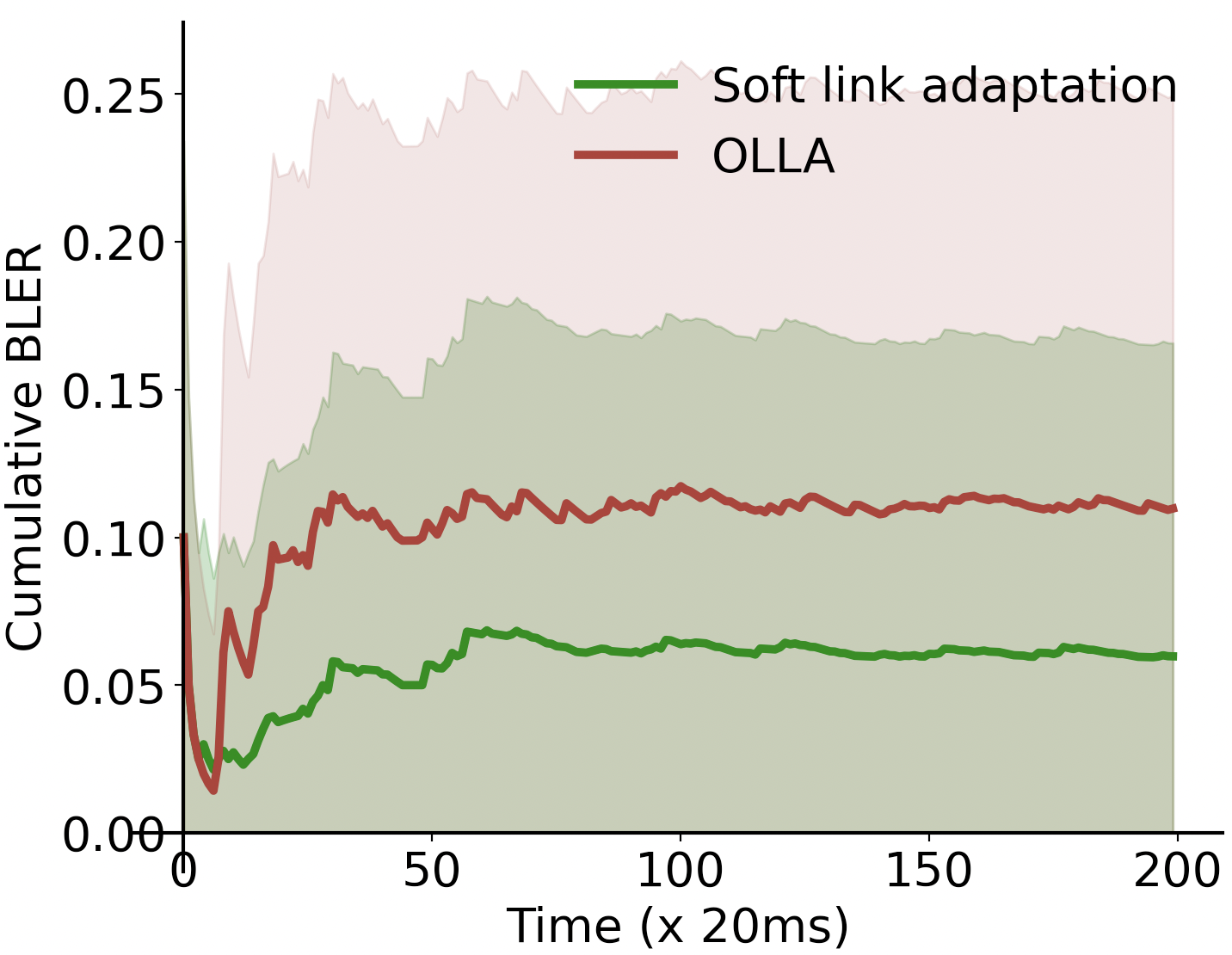}
  \caption{Visualization on soft link adaptation convergence}
 \label{figconvergence}
\end{subfigure}
 \begin{subfigure}{0.5\columnwidth}
\includegraphics[width=0.9\linewidth,height = 0.67\linewidth]{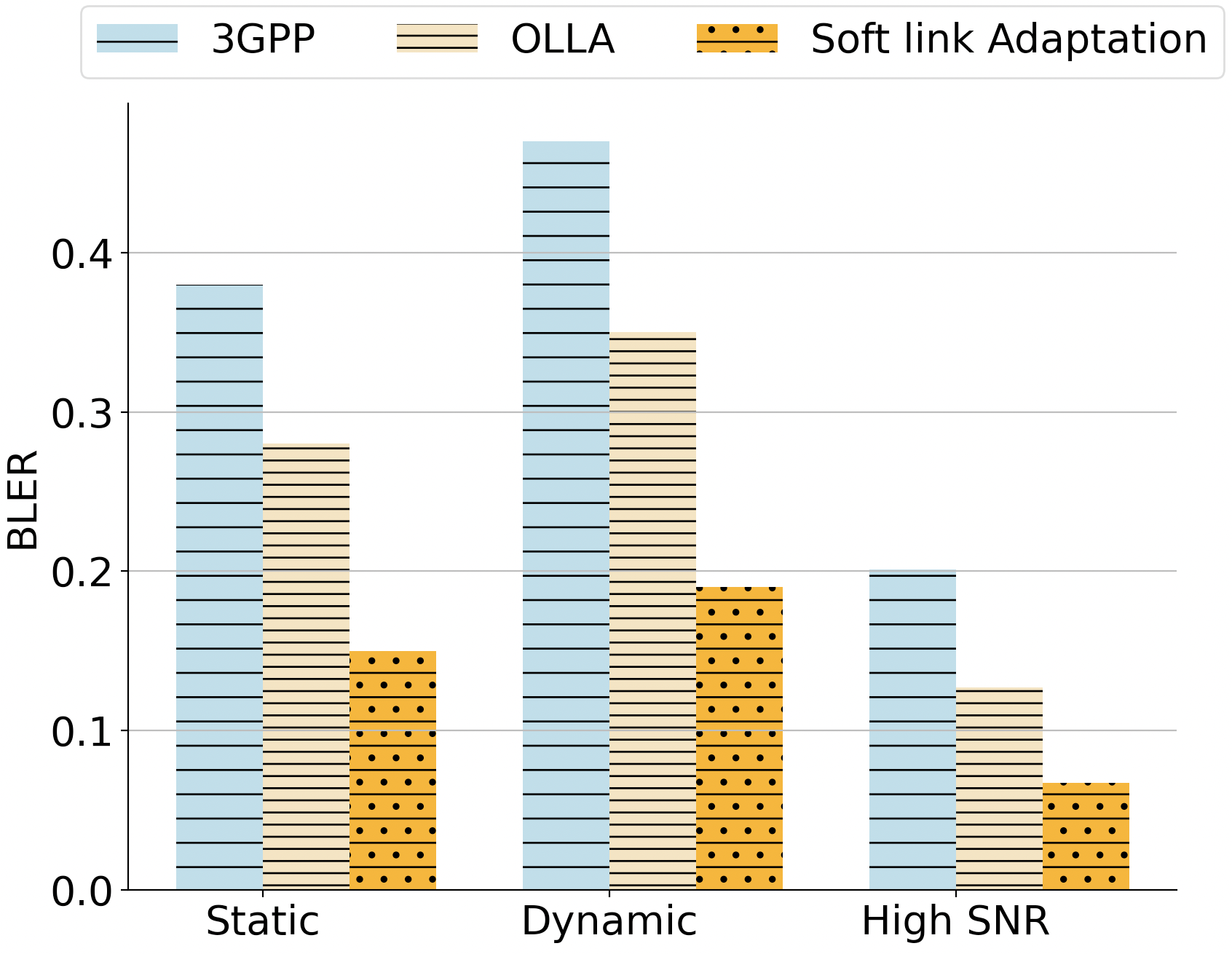}
  \caption{BLERs assessed at different link conditions}
   \label{bler}
\end{subfigure}
\caption{Evaluation on soft link adaptation.}
\end{figure}

\begin{table}[t]
\caption{Comparison of BLER performance at different SNR levels.}
\centering
\resizebox{0.499\textwidth}{!}{ 
\begin{tabular}{|c|c|c|c|c|c|c|c|} 
\hline
\textbf{BLER target} & \diagbox{\textbf{Schemes}}{\textbf{SNR}} & \textbf{-5 dB} & \textbf{0 dB} & \textbf{5 dB} & \textbf{10 dB} & \textbf{15 dB} & \textbf{20 dB} \\
\hline
\multirow{3}{*}{\textbf{0.1}} & 3GPP & 0.7 & 0.57 & 0.23 & 0.05 & 0.02 & 0.007\\ 
\cline{2-8}
& OLLA & \textbf{0.54} & \textbf{0.18} & \textbf{0.11} & \textbf{0.03} & \textbf{0.009} & \textbf{0.006}\\ 
\cline{2-8}
& \makecell{Soft Link \\Adaptation} & 0.55 & 0.17 & 0.07 & 0.02 & 0.005 & 0.004 \\ 
\hline
\multirow{3}{*}{\textbf{0.05}} & 3GPP & 0.7 & 0.57 & 0.23 & 0.05 & 0.02 & 0.007\\ 
\cline{2-8}
& OLLA & \textbf{0.54} & 0.18 & 0.07 & 0.02 & 0.007 & 0.005\\ 
\cline{2-8}
& \makecell{Soft Link \\Adaptation} & 0.55 & \textbf{0.17} & \textbf{0.052} & \textbf{0.015} & \textbf{0.003} & \textbf{0.0014} \\ 
\hline
\end{tabular}}
\label{tab:3}
\end{table}

\subsection{Soft Link Adaptation}
As shown in Fig. \ref{figconvergence}, soft link adaptation has effectively achieved BLER convergence at chunk level, which reaches stability at around 1 second. This confirms the chunk-level convergence as discussed in Section III B of our theoretical analysis. To demonstrate the average BLER level achieved by soft link adaptation, we assess the average BLER of soft link adaptation, OLLA, and the default 3GPP MCS lookup (3GPP) under different SNR conditions. As illustrated in Fig. \ref{bler}, soft link adaptation surpasses the other two schemes, attaining an almost 50\% reduction in BLER level across all link conditions. Table \ref{tab:3} presents a comprehensive analysis of BLER at two different BLER targets. The results indicate that the BLER convergence is reached at around 5dB SNR under soft link adaptation, significantly lower than the other two. Extensive assessments indicate that soft link adaptation outperforms the other two methods, leading to a more uniform BLER level. This consistency ensures more stable throughput and reduce the bit error rate, ultimately enhancing the quality of the delivered videos.
\begin{figure*}


  \begin{subfigure}{0.395\columnwidth}
    \includegraphics[width=\linewidth,height = 0.8\linewidth]{photo/abla1.pdf}
    \caption{PSNR}
    \label{fig:2b}
  \end{subfigure}
  \begin{subfigure}{0.395\columnwidth}
    \includegraphics[width=\linewidth,height = 0.8\linewidth]{photo/abla4.pdf}
    \caption{Bit error rate}
    \label{fig:2d}
  \end{subfigure}
    \begin{subfigure}{0.395\columnwidth}
    \includegraphics[width=\linewidth,height = 0.8\linewidth]{photo/abla3.pdf}
    \caption{SSIM}
    \label{fig:2e}
    \end{subfigure}
      \begin{subfigure}{0.395\columnwidth}
      \begin{center}
    \includegraphics[width=\linewidth,height = 0.8\linewidth]{photo/abla_QoE.pdf}
    \caption{QoE Scores}
    \label{fig:2c}
    \end{center}
  \end{subfigure}
       \begin{subfigure}{0.395\columnwidth}
      \begin{center}
    \includegraphics[width=\linewidth,height = 0.8\linewidth]{photo/abla_uti.pdf}
    \caption{Rate utilization}
    \label{fig:2c}
    \end{center}
  \end{subfigure}
  \caption{Ablation study of cross-layer design.}
  \label{ablation}
  \vspace{-0.2cm}
\end{figure*}
\subsection{Incremental Ablation Study}
To delve deeper into the benefits of soft link adaptation and video aware resource allocation, we conduct an incremental ablation analysis on MPC\footnote{The experimental setup for ablation study is based on the dynamic configuration.}. In this analysis, MPC-b denotes the variant lacking both soft link adaptation and video aware resource allocation, whereas MPC-a includes only soft link adaptation. As shown in Fig. \ref{ablation}, the average bit error rate significantly drops from 0.4 to below 0.2 with the implementation of soft link adaptation. Moreover, the addition of VRA to the soft link adaptation leads to a further reduction in the bit error rate, which falls from an average of 0.05 to nearly zero. The rate utilization of MPC is significantly improved by VRA, with MPC-S reaching a mean value of 0.95, compared to 0.8 for MPC-a and 0.75 for MPC-b (Fig. 12e). Additionally, MPC-S exhibits significant improvements in terms of PSNR, SSIM, and QoE.
\begin{figure*}[!h]
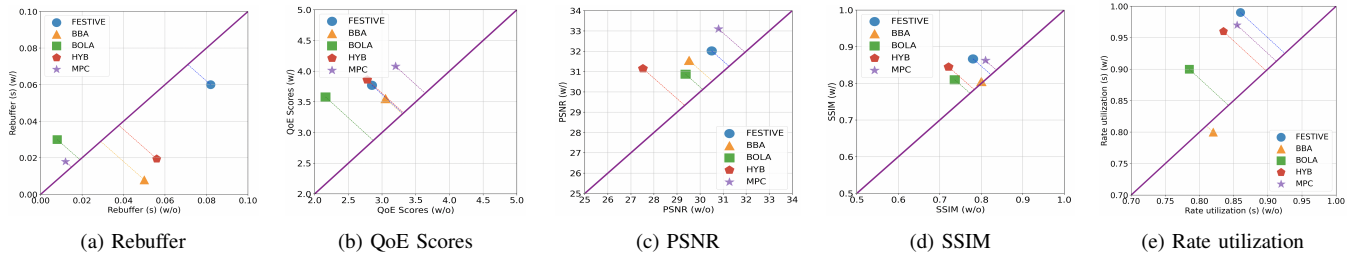

   \centering
        \begin{subfigure}{0.395\columnwidth}
      \begin{center}
    \includegraphics[width=0.95\linewidth, height = 0.8\linewidth]{photo/rebuf_all.pdf}
    \caption{Rebuffer}
    \label{fig:2c}
    \end{center}
  \end{subfigure}
  \begin{subfigure}{0.395\columnwidth}
      \begin{center}
    \includegraphics[width=0.91\linewidth, height = 0.8\linewidth]{photo/qoeyou.pdf}
    \caption{QoE Scores}
    \label{fig:2c}
    \end{center}
  \end{subfigure}
  \begin{subfigure}{0.395\columnwidth}
      \begin{center}
    \includegraphics[width=0.93\linewidth,height = 0.8\linewidth]{photo/PSNRyou.pdf}
    \caption{PSNR}
    \label{fig:2c}
    \end{center}
  \end{subfigure}
  \begin{subfigure}{0.395\columnwidth}
      \begin{center}
    \includegraphics[width=0.94\linewidth,height = 0.8\linewidth]{photo/ssimyou.pdf}
    \caption{SSIM}
    \label{fig:2c}
    \end{center}
  \end{subfigure}
    \begin{subfigure}{0.395\columnwidth}
      \begin{center}
    \includegraphics[width=0.94\linewidth,height = 0.82\linewidth]{photo/utiyou.pdf}
    \caption{Rate utilization}
    \label{fig:2c}
    \end{center}
  \end{subfigure}
\caption{Benefits of our cross-layer design on different ABRs.}
\vspace{-0.5cm}
\label{any}
  \end{figure*}

\section{Discussion and Challenge}
In this study, we present the closed-loop cross-layer optimization platform that integrates interactions across the PHY, MAC, and APP layers. Utilizing our cross-layer transmission framework, any ABR can adopt soft link adaptation and VRA.\par
\textbf{Benefits of StreamOptix:} To demonstrate the efficacy of our cross-layer design for all ABRs, we have included each ABR with soft link adaptation and VRA in our study for comprehensive assessment. The findings are depicted in Fig. \ref{any}a-e. The horizontal axis depicts the metrics evaluated using ABRs that include both soft link adaptation and VRA, whereas the vertical axis depicts the metrics assessed using ABRs lacking both cross-layer designs. Thus, the divergence from the diagonal line visually indicates the advantages gained from the cross-layer design. It is evident that, with the exception of BBA, which exhibits no notable variation in SSIM, all metrics show improvements across different ABRs, especially HYB. MPC outperforms all other ABRs in each scenario, validating our selection of MPC as the ABR. StreamOptix is also capable of simulating a range of distortions that replicate real-world transmission situations. Visualizations of these distortions can be found in Appendix C. StreamOptix offers access to both the distorted video bitstreams and the decoded videos. This assists future researchers in collecting additional distorted transmission streams to further explore the impact of various transmission configurations on video quality.\par

\textbf{System Complexity:} A major hurdle in evaluating StreamOptix is its computational complexity. The transmission of each TB involves channel convolution to account for multipath effects and doppler shift variations, causing a 2-second video chunk simulation to take several minutes. The entire assessment is performed on an Intel i9 processor, requiring approximately one and a half months to complete, thereby constraining the extension of learning-driven ABRs on this system. Future studies will explore the distribution of bit error patterns in simulated video streams across our wireless links. This exploration will help pinpoint distortion patterns within this distribution, enhancing our ability to simulate impairments more accurately in video stream transmission and possibly increasing the efficiency of link simulations. Although our design focuses on a single-user scenario, future research could examine multi-user case by integrating video-aware resource allocation with the existing strategies for multi-user resource allocation as noted in Section III E.

\section{Conclusion}
In this paper, we proposed a cross-layer video delivery scheme, StreamOptix, to effectively optimize video delivery. Our proposed three-level closed-loop framework optimizes video delivery across PHY, MAC and APP. In PHY, we propose soft link adaptation that implements soft-ACK feedback to improve the robustness of link adaptation. In MAC, resource allocation is further improved by assigning resource blocks under bitrate constraint to ensure maximum rate utilization. Then, in APP, we redesigned the MPC algorithm for ABR by utilizing link capacities measured from PHY to improve bitrate decision accuracy. Extensive assessments across diverse network conditions reveal substantial improvements of our design in both subjective and objective quality metrics. Furthermore, StreamOptix provides access to distorted transmission streams, enabling a more thorough quality evaluation and supplying multi-modal data for enhanced optimization.



%

\appendix
\section{Soft link adaptation algorithm}
\begin{algorithm}
  \caption{Soft Link Adaptation Algorithm}
  \label{alg1}
  \begin{algorithmic}
  \STATE \textbf{Initialize:} $\rm MCS-Table: \gamma \rightarrow \tau_{1,2,...15}$; Soft-ACK threshold $\Phi$
  \STATE \textbf{Initialize:} $\delta_t = \delta_0$, $\Delta_\text{down}$, $\Delta_\text{up}$ (Specify initial values)
  \STATE \textbf{Input:} Estimated SNR $\gamma_t$ from CSI; HARQ feedback flag from UE
  \FOR{each transmission $t$}
    \STATE $\gamma_t = \gamma_t - \delta_t$
    \STATE select $\tau_t$ from  $\gamma \rightarrow \tau_{1,2,...15}$ based on $\gamma_t$
    \IF{flag = ACK}
      \STATE estimate the block error probability $\hat{\eta_t}$
      \IF{$\hat{\eta_t} \leq \Phi$}
        \STATE flag = HIGHACK
      \ELSE
        \STATE flag = LOWACK
      \ENDIF
    \ENDIF
    \IF{flag = HIGHACK}
      \STATE $\delta_t \leftarrow \delta_t - \Delta_\text{down}$
    \ELSIF{flag = LOWACK \textbf{or} flag = NACK}
      \STATE $\delta_t \leftarrow \delta_t + \Delta_\text{up}$
    \ENDIF
  \ENDFOR
  \end{algorithmic}
\end{algorithm}

\section{Proof of the second Banach fixed point condition}
\subsection{OLLA}
According to \cite{banach}, the second Banach fixed point theorem asserts that the magnitude of the derivative of the recurrence function must be less than 1, that is:
\begin{equation}
 \left|T^{\prime}\left(\delta \right)\right|<1, \forall \delta \in\left[\delta_{\min }, \delta_{\max }\right].
 \label{sbf}
\end{equation}
Recall that the recurrence function for OLLA offset $\delta_t$ is
\begin{equation}
   T(\delta) = \delta + \eta_t \Delta_{\rm up} - (1- \eta_t) \Delta_{\rm down}.
\end{equation}
If we treat the average BLER as a function of $\delta$, then we have  
\begin{equation}
T^{\prime}\left(\delta \right)=1+\left(\Delta_{\rm up}+\Delta_{\rm down} \right) \cdot \eta_t^{\prime}\left(\delta \right).
\end{equation}
Note that the average BLER $\eta_t$ shows a negative correlation with $\delta$. This is because a higher $\delta$ can lead to an overestimated SNR, thereby increasing the chance of block decoding errors. According to \cite{eOLLA}, $\eta_t$ can be fitted via offline link simulations as $\eta_t(\delta) = 1/(1+e^{-\alpha_0\delta-\alpha_1})^s$, where $\alpha_0$, $\alpha_1$ and $s$ are fitting parameters and $s$ controls the slope. 
In order to satisfy the second Banach fixed point condition in Eq. (\ref{sbf}), since the derivative of $\eta_t \left(\delta \right)$ is negative, we have that 
\begin{equation}
\begin{aligned}
\left(\Delta_{\rm up}+\Delta_{\text {down}}\right) \cdot\left|\eta_t^{\prime}\left(\delta \right)\right|<2 \\ \Rightarrow\left|\eta_t^{\prime}\left(\delta \right)\right|<\frac{2}{\left(\Delta_{\rm up}+\Delta_{\text {down}}\right)}.
\label{2nd}
\end{aligned}
\end{equation}

We are going to find the maximum of $\eta_t^{\prime}\left(\delta\right)$, i.e., the values for which $\eta_t^{\prime \prime}\left(\delta\right)=0$, to ensure that the second Banach fixed point condition in Eq. (\ref{sbf}) is fulfilled. It can be shown that the maximum value of $\eta_t^{\prime}\left(\delta\right)$ can be achieved when the offset $\delta$ takes a value equal to
\begin{equation}
\delta_0 =\frac{1}{\alpha_1} \cdot\left(\ln \left(\frac{1}{s}+\alpha_0\right)\right).
\end{equation}
Therefore, the maximum value of $\left|\eta_t^{\prime}\left(\delta \right)\right|$ is
\begin{equation}
|\eta_t^{\prime}(\delta)|=\left|\alpha_1\right| \cdot \frac{1}{(1+1 / s)^{s+1}}.
\end{equation}
An upper bound on $\eta_t^{\prime}\left(\delta \right)$ can thus be found by setting $s \rightarrow \infty$, which is
\begin{equation}
\left|\eta_t^{\prime}(\delta)\right|=\frac{\left|\alpha_1\right|}{e} \text {. }
\label{last}
\end{equation}
According to \cite{eOLLA}, $\alpha_1$ is set to be -1.11, Combining (\ref{2nd}) and (\ref{last}) yields
\begin{equation}
\Delta_{\text {down}} + \Delta_{\text{up}} \leq \frac{2e}{1.11}.
\end{equation}
Therefore, convergence is guaranteed if $\Delta_{\rm up} + \Delta_{\rm down}$ does not exceed a certain limit $\frac{2e}{1.11}$.
\subsection{Soft link adaptation}
Recall that the recurrence function for soft link adaptation is
\begin{equation}
\begin{aligned}
T(\delta) = \delta &+ \eta_t\Delta_{\rm up}
- (1-\eta_t)\eta(\Phi)\Delta_{\rm down} \\
&+ (1-\eta_t)(1-\eta(\Phi))\Delta_{\rm up}.
\label{soft dynamic}
\end{aligned}
\end{equation}
Similarly, by taking the derivative of both sides, we have
\begin{equation}
    T'(\delta) = 1 +(\eta(\Phi)(\Delta_{\rm up}+\Delta_{\rm down})) \cdot \eta'_t(\delta).
\end{equation}
Since $\eta(\Phi) \leq 1$, in order to satisfy second Banach fixed point condition in Eq. (1), we have
\begin{equation}
    |\eta'_t(\delta)| < \frac{2}{\eta(\Phi)(\Delta_{\rm down} + \Delta_{\rm up})}.
\end{equation}
In the main text, it has been shown that the requirement for the first Banach fixed point condition to be satisfied by the soft link adaptation is
\begin{equation}
\eta(\Phi) \geq \frac{1}{1+\frac{\Delta_\text{\rm down}}{\Delta_\text{\rm up}}}.
\end{equation}
Refering to Eq. (\ref{last}), which establishes the maximum value of $\eta'_t(\delta)$ as $|\alpha_1|/e$, and combining Eq. (11) and (12), the parameters $\Delta_{\rm down}$ and $\Delta_{\rm up}$ must fulfill
\begin{equation}
    \Delta_{\rm up} \leq \frac{2e}{1.11}.
\end{equation}
Therefore, convergence is guaranteed if $\Delta_{\text{up}}$ does
not exceed $\frac{2e}{1.11}$.
\section{Video adaptation flow on MPC-S}
\begin{algorithm}
  \caption{Video adaptation flow on MPC-S}
  \label{alg1}
  \begin{algorithmic}
  \STATE \textbf{Initialize:} Prediction window length $M$; Throughput prediction interval $T$; Video file length $D$; History window length $N$; 
  \FOR {k = 1 to D}
  \STATE $\hat{C}_{[t_k,...,t_k+M*T]}$ = ThroughputPred($C_{[t_k-N*T,...,t_k]}$)
  \STATE $\text{QoE}_{max}=0$ 
  \FOR {each possible M-length bitrate trace $a_{[k,...,k+M-1]}$}
  \STATE $\text{QoE}_{[k,...,k+M-1]}=0$
  \FOR{i = 1 to M}
  \STATE $d_i = 0$; $\text{acc\_size} = 0$
  \STATE \textbf{Calculate the transmission delay $d_i$:}
  \FOR{j = 0}
   \IF{$\text{acc\_size}+\hat{C}_{t_k+j*T}*T \geq \rho(a_i)$}
   \STATE $d_i+=\frac{\rho(a_i)-\text{acc\_size}}{\hat{C}_{t_k+j*T}}$
   \STATE \textbf{Break}
   \ENDIF
   \STATE $d_i+= T$; $\text{acc\_size}+=\hat{C}_{t_k+j*T}*T$; j+=1
  \ENDFOR
  \STATE \textbf{Calculate rebuffering:} $\phi_i = (d_i - b_i)_{+}$
  \STATE \textbf{Calculate QoE:} 
  $\text{QoE}_{i} = a_{i,l} - \alpha |a_{i,l} - a_{i-1,l}| -  \beta \phi_{i}$ 
  \STATE \textbf{Update timeline:} $t_{i+1} = t_i + d_i$
  \STATE \textbf{Update buffer occupancy:} $b_{i+1} = (b_i - d_i) + T_{\rm chunk}$
  \ENDFOR
  \STATE $\text{QoE}_{[k,...,k+M-1]}+=\text{QoE}_i$
  \IF{$\text{QoE}_{[k,...,k+M-1]} \geq \text{QoE}_{max}$}
  \STATE $\text{QoE}_{max} = \text{QoE}_{[k,...,k+M-1]}$
  \ENDIF
  \ENDFOR
  \STATE Output the first element $a_{k+1}$ from the bitrate trace that achieves the maximum $\text{QoE}_{[k,...,k+M-1]}$
  \ENDFOR
  \end{algorithmic}
\end{algorithm}

\section{Distortions captured on StreamOptix}

\begin{figure}[!h]
  \centering
    \includegraphics[width=\linewidth]{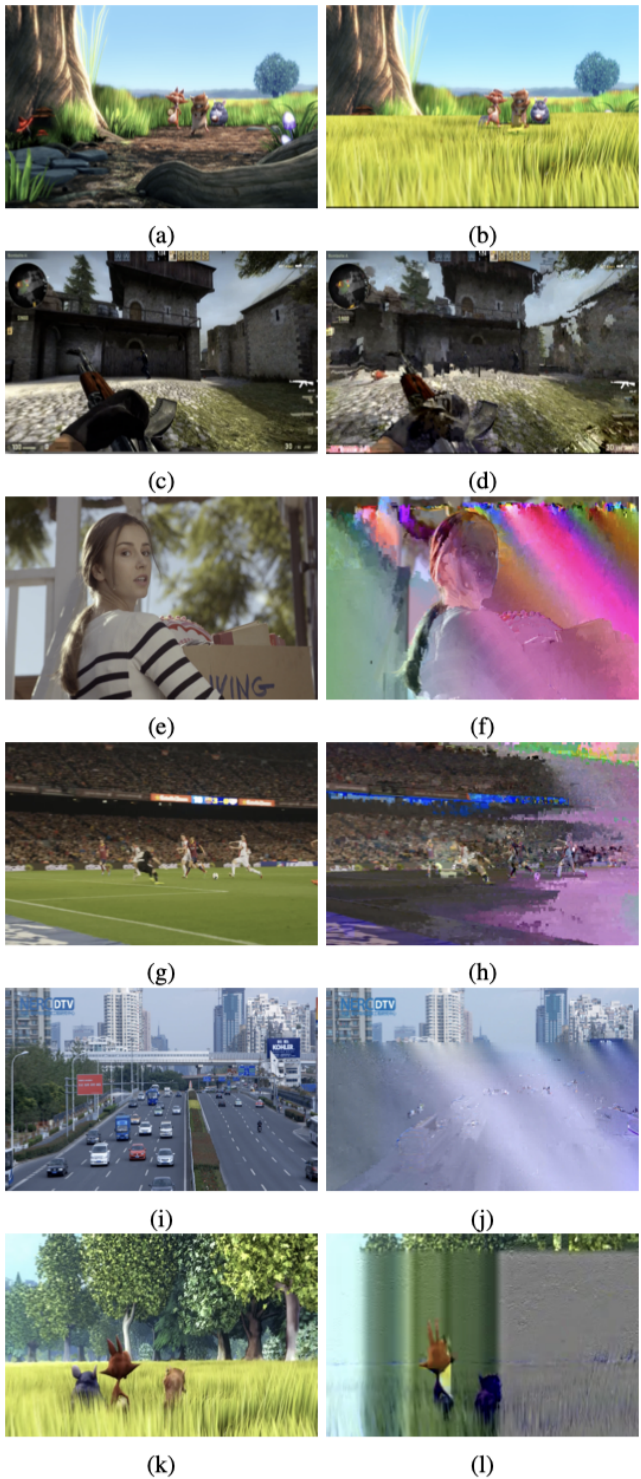}
     \caption{Visualization of the distortion patterns captured in
StreamOptix.}
    \label{fig:1a}
\end{figure}
\subsection{Visualization of video distortions}
Fig. 14 shows different distortions captured when transmitting videos over StreamOptix. In particular, once the video stream is encoded into a bitstream at the application layer, it is then encapsulated into transport blocks (TBs) when it reaches the wireless link. These TBs are subject to random multipath fading and doppler spread within the wireless link, which can hinder their correct decoding at the receiver's end. As a result, this can impact the bitstream decoding process, causing various types of video distortion. Compared with existing trace-driven based ABR emulation platforms (Pensieve), StreamOptix is able to simulate various realistic corruption patterns including reference error (a, b), misalignement (c, d), color artifacts (e, f), block artifacts (g, h), texture loss (i, j), duplication artifacts (k, l). These distortion patterns are influenced not only by the bit error rate of the transmitted video but also by the error location, which tends to be random and unpredictable. Therefore, the most effective way to mitigate these issues is to enhance PHY link adaptation to lower the likelihood of errors. In the future, we could integrate error-specific video recovery methods into our system for further enhancements. As shown in Fig. 1, videos transmitted via StreamOptix more closely resemble corrupted videos in real-world scenarios.

%

\ifCLASSOPTIONcaptionsoff
  \newpage
\fi



%

\bibliographystyle{IEEEtran}
\bibliography{IEEEabrv,Bibliography}

%








\end{document}